\newtheorem{obs}[theorem]{Observation}
\newcommand{\msize}[1]{{\,\left|#1\right|\,}}
\newcommand{\floor}[1]{\left\lfloor#1\right\rfloor}
\newcommand{\calS}{{\cal S}}
\title{On the Computational Complexity of Generalized Common Shape Puzzles} 
\author{Mutsunori Banbara}%
{Nagoya University, Nagoya, Japan}{}{https://orcid.org/0000-0002-5388-727X}{}
\author{Shin-ichi Minato}%
{Kyoto University, Kyoto, Japan}{}{}{}
\author{Hirotaka Ono}%
{Nagoya University, Nagoya, Japan}{}{https://orcid.org/0000-0003-0845-3947}{}
\author{Ryuhei Uehara\footnote{Corresponding author}}%
{School of Information Science, Japan Advanced Institute of Science and Technology, Japan}%
{uehara@jaist.ac.jp}{https://orcid.org/0000-0003-0895-3765}%
{}
\authorrunning{M. Banbara et al.}
\keywords{Common shape puzzle; shape logic; least common multiple shape; NP-completeness; polyform compatibility; 
polypolyomino; SAT-based solver; undecidability.} 
\begin{document}

\maketitle

\begin{abstract}
In this study, we investigate the computational complexity of some variants of generalized puzzles.
We are provided with two sets $\calS_1$ and $\calS_2$ of polyominoes.
The first puzzle asks us to form the same shape using polyominoes in $\calS_1$ and $\calS_2$.
We demonstrate that this is polynomial-time solvable if $\calS_1$ and $\calS_2$ have constant numbers of polyominoes,
and it is strongly NP-complete in general.
The second puzzle allows us to make copies of the pieces in $\calS_1$ and $\calS_2$.
That is, a polyomino in $\calS_1$ can be used multiple times to form a shape.
This is a generalized version of the classical puzzle known as the common multiple shape puzzle.
For two polyominoes $P$ and $Q$, the common multiple shape is a shape that can be formed by 
many copies of $P$ and many copies of $Q$.
We show that the second puzzle is undecidable in general.
The undecidability is demonstrated by a reduction from a new type of undecidable puzzle based on tiling.
Nevertheless, certain concrete instances of the common multiple shape can be solved in a practical time.
We present a method for determining the common multiple shape for provided tuples of polyominoes
and outline concrete results, which improve on the previously known results in puzzle society.
\end{abstract}

\acknowledgements{I want to thank \dots}


\section{Introduction}
\label{sec:intro}
Research on the computational complexity of puzzles and games has become increasingly important in theoretical computer science 
(see \cite{Uehara2023a} for a comprehensive survey).
Since the 1990s, numerous puzzles have been demonstrated to be NP-complete in general. 
These results provide a certain amount of some common sense regarding the NP class.
However, it has not been possible to capture certain puzzles, among which the sliding block puzzle is representative.
This has remained an open problem since Martin Gardner pointed out in the 1960s that 
a certain theorem is required to understand such puzzles.
However, after 40 years, Hearn and Demaine proposed a framework known as
constraint logic, and demonstrated that these puzzles are PSPACE-complete \cite{HearnDemaine2005,HearnDemaine2009}.
Combinatorial reconfiguration problems 
have been investigated towards an understanding of the PSPACE class \cite{ItoDemaineHarveyPapadimitriouSideriUeharaUno2011}.

With the developments in theoretical computer science in the past decade, 
new series of puzzles have been developed in puzzle society. 
In comparison to classical packing puzzles, one major property of these puzzles is that the goal is not explicitly stated.
The first example is the \emph{symmetric shape puzzle}. This puzzle asks us to form a symmetric shape
using a given set of pieces. It is extremely challenging to solve such a puzzle because 
we cannot be sure whether or not we are approaching the goal.
This property makes the puzzle very difficult, and in fact, 
only a few pieces are sufficient to cause this difficulty \cite{DKKMORRUU2020}. 
The second example is the \emph{anti-slide puzzle}. This puzzle asks us to interlock a given set of pieces.
A typical instance asks us to pack the given pieces into a frame so that any piece cannot be slide in the frame.
This puzzle is also difficult because the goal is not explicitly stated.
The computational complexity of this puzzle was recently investigated by \cite{MUH2023}.

\begin{figure}[thb]\centering
\includegraphics[bb=0 0 842 595,width=0.8\textwidth]{./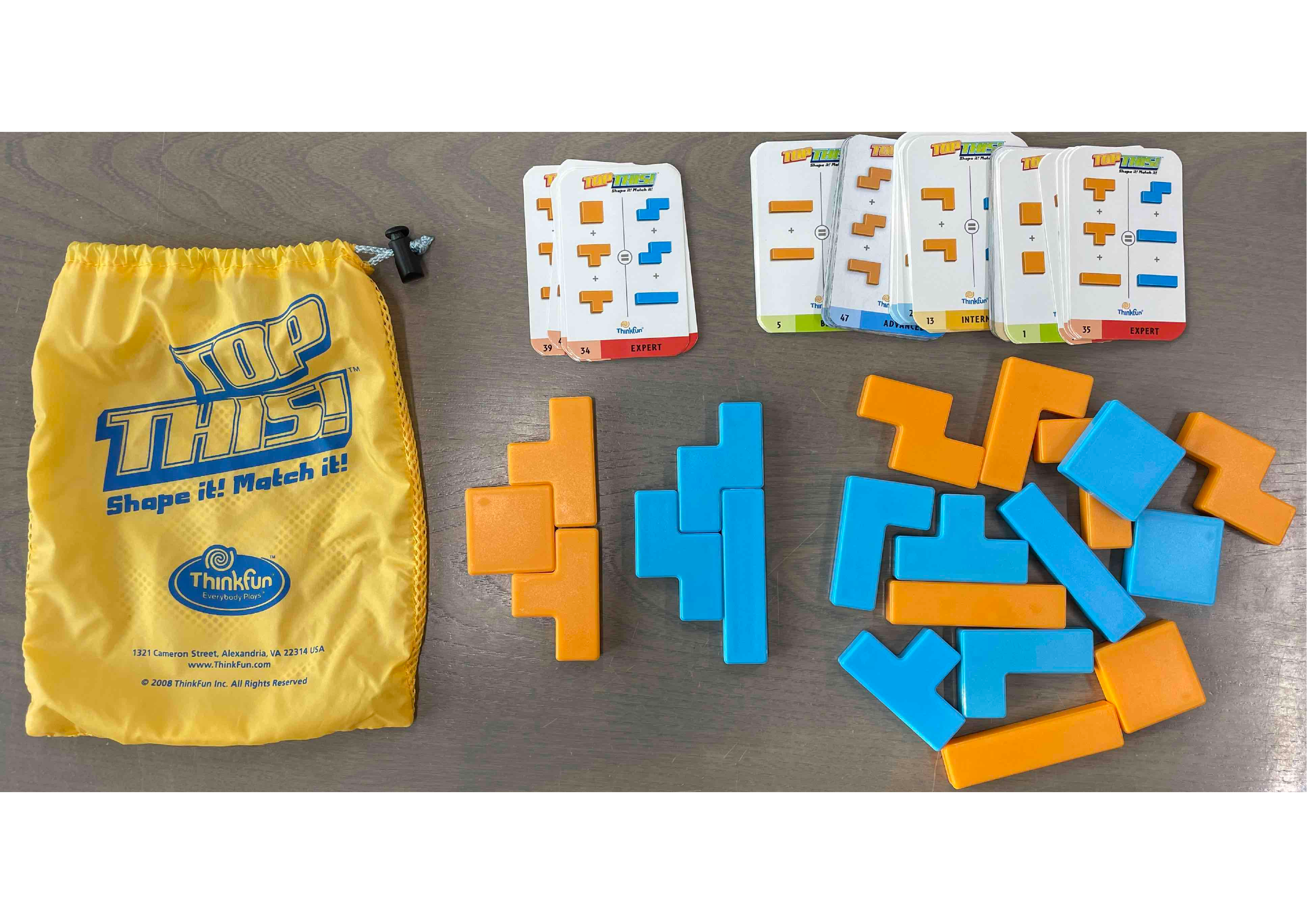}
\caption{Shape Logic (commercial product by ThinkFun)}
\label{fig:shapelogic}
\end{figure} 

In this study, we focus on such a puzzle that is known as the \emph{common shape puzzle}.
Many instances of this puzzle are available in puzzle society, and 
a commercial product named ``Shape Logic'' exists.
(The authors confirmed that this puzzle was named ``Top This!'' in 2008 (\figurename~\ref{fig:shapelogic}),
``ShapeOmetry'' in 2012, and ``Shape Logic'' more recently by the same puzzle maker.
The puzzle ``Top This!'' won three awards in 2008.\footnote{\url{https://www.thinkfun.com/about-us/awards/}}
However, in this paper, we use the most recent name.)
In the shape logic puzzle, we are provided with two sets $\calS_1$ and $\calS_2$ of polygons.
We must find a polygon $X$ that can be formed by the pieces in $\calS_1$ and $\calS_2$, respectively, 
as in the classic silhouette puzzle Tangram.
The main difference between the Tangram and the shape logic puzzle is that 
the target shape $X$ is not provided, which drastically increases the difficulty of the puzzles.

Hereafter, we suppose that $\max\{\msize{\calS_1}/\msize{\calS_2},\msize{\calS_2}/\msize{\calS_1}\}$ is bounded above by a constant.
(We note that if $\calS_1$ contains only one piece, the target shape $X$ is fixed to it.
Therefore, it is equivalent to the classic puzzle Tangram for $\calS_2$, and 
it is NP-complete even if all pieces in $\calS_1$ and $\calS_2$ are rectangles \cite{DemaineDemaine2007}.)

We first demonstrate that it is polynomial-time solvable if $\msize{\calS_1}+\msize{\calS_2}$ is a constant.
Subsequently, we show that the shape logic puzzle is strongly NP-complete even if all the pieces in $\calS_1$ and $\calS_2$ are small rectangles.
We state that a rectangle in $\calS_1 \cup \calS_2$ is small if it has a polynomial size of $\msize{\calS_1}+\msize{\calS_2}$.

\begin{figure}[thb]\centering
\includegraphics[bb=0 0 1250 1250,width=0.4\textwidth]{./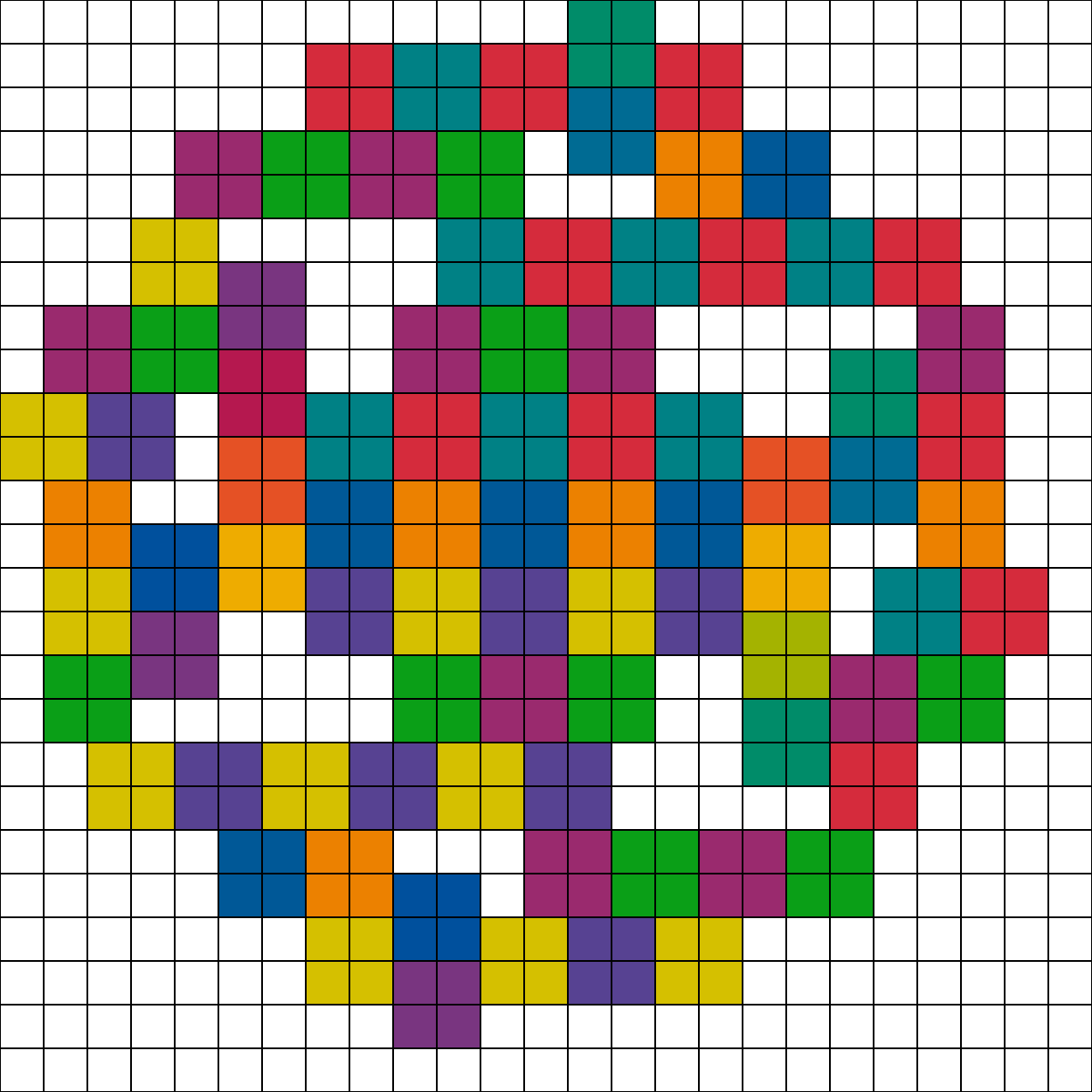}
\includegraphics[bb=0 0 1250 1250,width=0.4\textwidth]{./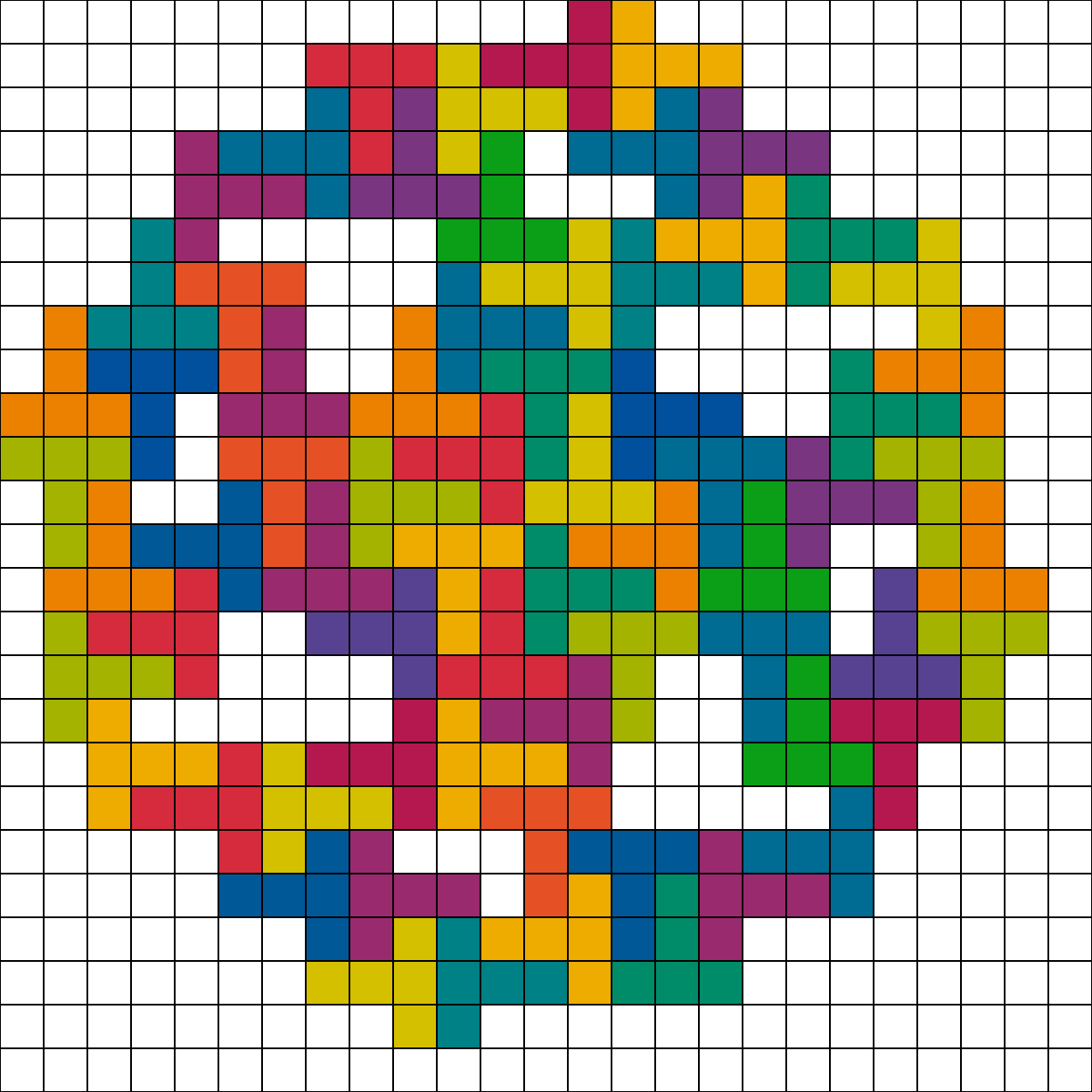}
\caption{Copies of a pentomino and copies of a tetromino share a large common shape}
\label{fig:5T=4Q}
\end{figure} 

Next, we focus on a similar puzzle named the \emph{common multiple shape puzzle}, 
which has been investigated in puzzle society for a long time under a few different names 
such as ``polypolyomino''\footnote{\url{https://www.iread.it/Poly/}} and  
``polyform compatibility''\footnote{\url{https://sicherman.net/polycur.html}}.
We propose the term ``the (least) common multiple shape'' based on the term
``the least common multiple,'' as the corresponding Japanese name is used in Japanese puzzle society.\footnote{In Japan,
we use \raisebox{-0.4em}{\includegraphics[width=48pt]{./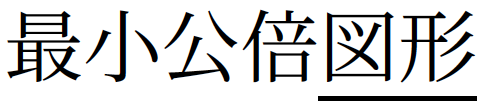}} 
(least common multiple \underline{shape}) following 
\raisebox{-0.4em}{\includegraphics[width=40pt]{./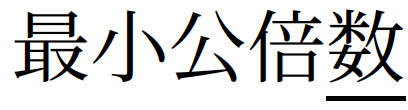}} 
(least common multiple \underline{number}).}
The representative instance of this puzzle is as follows: We are provided with two polygons $P$ and $Q$.
The puzzle asks us to find the (smallest) shape $X$ that can be tiled by $P$, and also tiled by $Q$.
That is, we can use any number of copies of $P$ and $Q$, and find the common shape $X$ that can be filled by only copies of $P$,
as well as only copies of $Q$. 
The goal of this puzzle is to determine the minimum shape, however, 
it is known that some pairs result in a huge solution (e.g., \figurename~\ref{fig:5T=4Q}), 
and at times, it cannot be guaranteed that this is the minimum shape for such a huge solution.
(The problem in \figurename~\ref{fig:5T=4Q} of finding a small common multiple shape of the T-pentomino and O-tetromino 
was first proposed by Robert Wainright as a problem at the conference of games and puzzles 
competitions on computers\footnote{\url{http://hp.vector.co.jp/authors/VA003988/gpcc/gpcc.htm}} in 2005 and 2011. 
A solution with an area of 600 was found in 2011, and 
it was improved to 340 in 2011.\footnote{\url{http://deepgreen.game.coocan.jp/MCFG/MCFG\_{}index.htm}}
In fact, it remains open whether or not this shape with an area of 340 in \figurename~\ref{fig:5T=4Q} is the smallest.)

We naturally consider the (least) common multiple shape variant of the shape logic puzzle.
That is, for given sets $\calS_1$ and $\calS_2$ of polygons, the puzzle asks us to find a small common shape $X$
that can be filled by copies of pieces in $\calS_1$ (and $\calS_2$, respectively).
We show that this puzzle is undecidable even if each set of $\calS_1$ and $\calS_2$ contains small polyominoes.
As a corollary, we also demonstrate that the following problem is undecidable:
For a given set $\calS$ of small polyominoes, determine whether a rectangle can be formed using copies of the pieces in $\calS$.

In this study, we also present a formulation of these puzzles and verify the feasibility with a computer.
We recently discovered that such puzzles can be solved by SAT-based solvers with sophisticated modeling 
far more efficiently than when using other methods \cite{BHHM2021}.
By determining an efficient formulation of this puzzle and using a SAT-based solver, 
we also improve several known instances of the common multiple shapes that have been investigated in puzzle society.

\section{Preliminaries}
\label{sec:pre}
A \emph{polyomino} is a polygon that can be obtained by joining one or more unit squares edge to edge \cite{Golomb1994}.
In this study, we only consider simple polyominoes (without holes) as polygons.
(We note that even if all pieces are simple, the solution may have holes, as indicated in \figurename~\ref{fig:5T=4Q}.)
If a polyomino $P$ is formed by $k$ unit squares, we refer to $P$ as a \emph{$k$-omino}.
For a specific $k$, we also refer to it as a \emph{monomino}, \emph{domino}, \emph{tromino}, \emph{tetromino}, \emph{pentomino}, 
and \emph{hexomino} for $k=1,2,3,4,5,6$, respectively.

A set $\calS_1$ of polyominoes is said to be a set of \emph{small} polyominoes 
when the maximum polyomino in $\calS_1$ is a $k$-omino for $k=O(\msize{\calS_1}^c)$ for a positive constant $c$.
In this case, we assume that the input size of the problem is bounded above by $O(p(n))$ for a polynomial function $p$, 
where $n=\msize{\calS_1}+\msize{\calS_2}$.

In this study, we consider two problems on polyominoes.
The first problem is the \emph{shape logic puzzle}.
Given two sets of $\calS_1$ and $\calS_2$ of small polyominoes, 
the puzzle asks us to form a common polyomino $X$ using all pieces in $\calS_1$ and all pieces in $\calS_2$, respectively.
The goal shape $X$ is not provided.
Clearly, the shape logic puzzle is in NP when all pieces are small 
as we can guess $X$ and verify the feasibility of the given packing of $\calS_1$ and $\calS_2$ on $X$ in polynomial time.

The second problem is the \emph{common multiple shape puzzle}.
Given two finite sets $\calS_1$ and $\calS_2$ of polyominoes, 
the puzzle asks us to form a common polyomino $X$ with a positive area 
using copies of the pieces in $\calS_1$ and copies of the pieces in $\calS_2$, respectively.
This puzzle generalizes both of the shape logic puzzle and the puzzle known as 
the \emph{polypolyomino} (also referred to as \emph{polyform compatibility}).
The latter puzzle is the case in which $\msize{\calS_1}=\msize{\calS_2}=1$.
It can be extended from two sets to three or more sets naturally. (See Section \ref{sec:implementation} in this case.)

For a finite set $\calS$ of polyominoes, we define a set $\hat{\calS}$ of polyominoes $P$ 
such that $P$ can be formed by copies of the pieces in $\calS$.
Clearly, $\hat{\calS}$ is infinite and countable.
That is, the common multiple shape puzzle asks whether or not $\hat{\calS_1}\cap\hat{\calS_2}\neq\emptyset$.
When $\hat{\calS_1}\cap\hat{\calS_2}\neq\emptyset$, 
we refer to an element in $\hat{\calS_1}\cap\hat{\calS_2}$ as a \emph{common multiple shape}.
Among the common multiple shapes, the smallest one is the \emph{least common multiple shape}.

\section{Complexity of Shape Logic Puzzle}
\label{sec:shape-logic}

In this section, we focus on the generalized shape logic puzzle.
That is, given two sets $\calS_1$ and $\calS_2$ of polyominoes,
we need to decide if all pieces in $\calS_1$ (and in $\calS_2$) can form a common polyomino $X$.

\begin{obs}
\label{obs:poly}
When $\msize{\calS_1}+\msize{\calS_2}$ is a constant $k$, 
the generalized shape logic puzzle can be solved in polynomial time of $n$, 
where $n$ is the total number of vertices in $\calS_1\cup\calS_2$.
\end{obs}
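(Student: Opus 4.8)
The plan is to enumerate all candidate target shapes $X$ that could possibly arise, and for each one check whether both $\calS_1$ and $\calS_2$ can tile it. The key observation is that since $k = \msize{\calS_1}+\msize{\calS_2}$ is constant, the target $X$ must be a polyomino assembled from at most $k$ pieces, and more to the point, if $X$ is formed by packing the pieces of $\calS_1$, then $X$ is covered by $\msize{\calS_1} \le k$ translated/rotated/reflected copies of the input polyominoes. So the combinatorial structure of a valid packing is bounded: there are only $8$ orientations per piece (rotations and reflections), a constant number of pieces, and the \emph{relative} placement of the pieces is what we need to pin down. The difficulty is that each input polyomino is given by its boundary as a polygon with $n$ vertices, so a piece may be geometrically huge, and we cannot afford to work on a unit-square grid whose side is polynomial in the coordinates.

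First I would argue that, given a candidate combinatorial layout of the $\msize{\calS_1}$ pieces (a choice of orientation for each piece and a choice, for each piece after the first, of which edge/vertex feature it abuts against on the partial assembly built so far), the actual coordinates of every piece are forced up to a global translation: two polyominoes placed edge-to-edge with a specified contact have their offset determined by the vertex coordinates, which are part of the input. Hence each layout yields at most one shape $X$ (up to congruence), computable in time polynomial in $n$ by straightforward coordinate arithmetic, and we must verify that the pieces actually pack $X$ without overlap and without gaps — this is a polynomial-time geometric check since it reduces to comparing $O(k)$ polygons with $O(n)$ total vertices (e.g., via a sweep or by checking that the union has the right area and the pairwise interiors are disjoint). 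The number of such layouts is bounded by a function of $k$ alone times a polynomial in $n$ (to account for which of the $O(n)$ boundary features of the growing assembly each new piece attaches to), so enumerating them all is polynomial in $n$.

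Then I would do the same enumeration for $\calS_2$, producing a polynomial-size list $\mathcal{L}_2$ of candidate shapes. The puzzle has a solution if and only if some shape produced from $\calS_1$ is congruent to some shape produced from $\calS_2$; since congruence of two polyominoes given by their vertex lists is testable in time polynomial in $n$ (normalize each by a canonical choice among its $8$ orientations and a canonical starting vertex, then compare), we simply compare the two lists pairwise. The total running time is polynomial in $n$ because both lists have polynomial length and each comparison is polynomial.

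The main obstacle I anticipate is making the first step fully rigorous: one must be careful that a legal packing of a polyomino $X$ by pieces of $\calS_1$ really is captured by a \emph{bounded} amount of discrete choice. The subtlety is that a piece need not touch the previously placed pieces along a full edge — it might only touch at a corner, or the "assembly order" might matter — so the right formalization is to observe that in any edge-to-edge polyomino packing, the set of grid-translations relating the pieces to a common integer lattice is determined once we fix one piece, and the number of distinct relative positions that avoid overlap while staying within a connected bounded region is polynomial in $n$ (since all pieces together have $O(n)$ vertices, the bounding box of $X$ has perimeter $O(n)$ in lattice units after rescaling so that the pieces are genuine polyominoes, and each piece can sit in only $O(n)$ essentially different places relative to that box). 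Once this bound is nailed down, the rest is routine bookkeeping.
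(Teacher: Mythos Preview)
Your overall plan---enumerate candidate packings and compare---is the right instinct, but the bound you rely on is false, and this is a genuine gap, not a bookkeeping issue. You assert that ``the bounding box of $X$ has perimeter $O(n)$ in lattice units'' and that each new piece has only $O(n)$ attachment positions. Take $\calS_1=\{R_1,R_2\}$ with $R_1,R_2$ both $1\times m$ rectangles: here $n=8$, yet placing $R_2$ alongside $R_1$ with integer offset $t\in\{-(m-1),\ldots,m-1\}$ already gives $\Theta(m)$ pairwise non-congruent connected shapes. Nothing in your argument rules these out, so your enumeration of $\calS_1$-shapes alone is not polynomial in $n$. The point you yourself flag as ``the main obstacle'' is exactly where the argument breaks: polyomino pieces can slide along shared straight edges by any integer amount, and those amounts are bounded by coordinate magnitude, not by the vertex count.

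The paper avoids this by \emph{not} enumerating $\calS_1$-shapes in isolation. It reduces to the symmetric assembly puzzle of \cite[Section~3]{DKKMORRUU2020}: if $\calS_1$ and $\calS_2$ both realize a polygon $P$, then gluing $P$ to its mirror $P^R$ at a rightmost vertex yields a line-symmetric figure assembled from all pieces in $\calS_1\cup\calS_2$, with $\calS_1$ on one side and $\calS_2$ on the other. One then restricts the brute-force search of \cite{DKKMORRUU2020} to symmetric assemblies of this particular form. The crucial difference from your approach is that the symmetry couples the two tilings: boundary vertices of $P$ coming from $\calS_1$-pieces must coincide (via the mirror) with boundary vertices coming from $\calS_2$-pieces, and it is these cross-incidences that pin down the continuous sliding freedom and reduce the search to polynomially many combinatorial cases in $n$. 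Enumerating $\calS_1$-packings and $\calS_2$-packings separately throws this coupling away, which is why your count explodes.
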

\begin{proof}(Outline.)
We solve the problem by brute force using the same 
technique as in \cite[Section 3]{DKKMORRUU2020}.
In \cite[Section 3]{DKKMORRUU2020}, they 
presented a method for solving the symmetric assembly puzzle,
which asks us to form a symmetric shape by using the pieces in a set of (general) simple polygons in polynomial time.

The generalized shape logic puzzle can be reduced to the symmetric shape puzzle in polynomial time as follows:
Suppose that the generalized shape logic puzzle has a solution and 
the pieces in $\calS_1$ forms a polygon $P$, which can also be formed by the pieces in $\calS_2$.
Then, without loss of generality, $P$ can be placed so that its rightmost vertex $v$ is on $\partial P$;
that is, any point in $P$ is not right of $v$.
At this point, we obtain a symmetric shape by joining $P$ and its mirror image $P^R$ at vertex $v$ with its mirror image on $\partial P^R$.
That is, when the shape logic puzzle with $\calS_1$ and $\calS_2$ has a solution,
the symmetric shape puzzle also has a solution for $\calS_1\cup\calS_2$ such that the left half of the symmetric shape consists of 
the pieces in $\calS_1$ and the right half of the symmetric shape consists of the pieces in $\calS_2$.
The proof in \cite[Section 3]{DKKMORRUU2020} is based on brute force.
Therefore, we can restrict our search to a symmetric shape that also provides a solution for the shape logic puzzle in $\calS_1\cup\calS_2$.
As the original brute force algorithm for the symmetric shape puzzle runs in a polynomial time of $n$, so does our algorithm.
\qed\end{proof}

\begin{theorem}
The shape logic puzzle is strongly NP-complete even if all pieces in $\calS_1$ and $\calS_2$ are small rectangles.
\end{theorem}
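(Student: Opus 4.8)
The plan is to first note membership in NP and then prove strong NP-hardness by a reduction from \textsc{3-Partition}, which is strongly NP-complete. Membership is immediate from what has already been noted in the text: since all pieces are small, one guesses the common polyomino $X$ together with a placement of every piece of $\calS_1$ and of $\calS_2$, and checks in polynomial time that both placements are exact tilings of $X$. Because all numbers produced by the reduction below are polynomially bounded in $\msize{\calS_1}+\msize{\calS_2}$ and all rectangles are small, hardness in this setting yields strong NP-completeness.

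Recall a \textsc{3-Partition} instance: positive integers $a_1,\dots,a_{3m}$ with $\sum_i a_i=mB$ and $B/4<a_i<B/2$ for each $i$, asking whether the $a_i$ split into $m$ triples each summing to $B$; this stays NP-complete when $B$ is polynomially bounded in $m$. First I would pad the instance, adding a large constant $t$ to every $a_i$ and replacing $B$ by $B+3t$: this preserves the answer and the interval condition, and lets me assume that every $a_i$ exceeds any fixed polynomial in $m$ and that $B$ meets the few divisibility/parity conditions needed below. I then take $\calS_1$ to be a \emph{number gadget}, one rectangle of dimensions $3\times a_i$ per (padded) number $a_i$, so the intended common shape is the $3m\times B$ \emph{board} obtained by laying each triple as a height-$3$ strip of width $B$ and stacking the $m$ strips. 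For $\calS_2$ I would use a \emph{board gadget}: a multiset of congruent small rectangles that tiles the $3m\times B$ board trivially, chosen so that (i) $\msize{\calS_1}$ and $\msize{\calS_2}$ differ by at most a constant factor, and (ii) every side of every $\calS_2$-piece is at least a fixed constant that is $\ge 2$. Condition (ii) is the critical design point: any family of rectangles sharing a common side length can be laid out as one long thin strip, so genuinely two-dimensional board pieces are needed to forbid such degenerate ``common shapes''; if the board gadget alone does not pin the target down, I would further rigidify the construction by appending matching ``frame'' rectangles to both sets.

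For correctness, the easy direction is clear: from a partition into triples one obtains the $3m\times B$ board, tiled by $\calS_1$ as height-$3$ bands and by $\calS_2$ by construction, hence a common shape. The hard direction — and the main obstacle — is to show that \emph{any} common shape $X$ encodes a \textsc{3-Partition} solution, and the work lies in excluding spurious common shapes. Using the padding inequality (every $a_i$ exceeds the board height) together with the constraints that a $\calS_2$-tiling forces on $X$, I would argue that in any $\calS_1$-tiling of $X$ each number piece lies horizontally, never rotated to an $a_i\times 3$ bar, so a piece touching the top boundary of $X$ occupies a complete height-$3$ band; iterating this forces a band structure, which when matched against a $\calS_2$-tiling (and, if used, the frame pieces) should force $X$ to be the $3m\times B$ board up to the allowed rotations and reflections. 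In that board each height-$3$ band is filled by number pieces of total width $B$, and since each $a_i\in(B/4,B/2)$ exactly three pieces fit in a band and they sum to $B$; this recovers a partition into triples. The delicate part is the case analysis that rules out non-rectangular candidates $X$ — staircase-like stacks of bands, L-shapes, and above all strip-shaped polyominoes — which is precisely what conditions (i)--(ii) and the padding are engineered to eliminate; closing every one of those cases is where the real effort lies.
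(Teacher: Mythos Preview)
Your overall strategy---NP membership plus a reduction from \textsc{3-Partition} using rectangular pieces---matches the paper's. But the proposal stops short of a proof at exactly the point that matters: you say yourself that ``closing every one of those cases is where the real effort lies,'' and you do not close them. Two concrete problems. First, $\calS_2$ is never actually specified; you only list desiderata (i)--(ii) and add a hedge about appending ``frame rectangles'' if those turn out to be insufficient. Without a fixed $\calS_2$ there is no construction to verify. Second, your $3\times a_i$ number pieces can always be concatenated into a single $3\times mB$ strip, so any $\calS_2$ whose congruent piece has a side of length $3$ (which your condition (ii) permits) admits that strip as a spurious common shape encoding no partition at all. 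More generally, your band argument (``a piece touching the top boundary occupies a complete height-$3$ band'') already presupposes that $X$ has height $3m$, which is precisely what has to be shown.

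The paper avoids the open-ended case analysis by a sharper choice of dimensions. It takes $\calS_1$ to be $3m$ width-$1$ strips of lengths $a_i+3m^2$ and $\calS_2$ to be $3m$ congruent $m\times\bigl(B/(3m)+3m\bigr)$ blocks, with the padding arranged so that one chain of inequalities holds:
\[
3m \;<\; \frac{B}{3m}+3m \;<\; \frac{B}{4}+3m^2 \;<\; a_i+3m^2 .
\]
The leftmost inequality says that even stacking all $3m$ width-$1$ strips yields height at most $3m$, strictly less than the long side of any $\calS_2$ block; the rightmost says every $\calS_1$ strip is longer than the long side of any $\calS_2$ block. These two facts force every $\calS_2$ block to lie with its $m$-side vertical and pin the common shape to the $m\times(B+9m^2)$ rectangle, after which the $\calS_1$ packing reads off a $3$-partition directly. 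The key design choice you are missing is width~$1$ (not width~$3$) for the number pieces: it makes ``total height $\le$ number of pieces $=3m$'' available as a global bound, and that single inequality replaces the entire case analysis you were anticipating.
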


\begin{proof}
According to the definition of a small polyomino, the problem is in NP.
Thus, we demonstrate the NP-hardness using a reduction from the 3-partition problem.
In the 3-partition problem, we are provided with a multiset of 
$3m$ positive integers $A=\{a_1,a_2,\ldots,a_{3m}\}$, where the $a_i$s are bounded 
above by a polynomial of $m$. 
The goal is to partition the multiset $A$ into $m$ triples such that every triple
has the same sum $B=(\sum_{i=1}^{3m}a_i)/m$.
It is known that the 3-partition problem is strongly NP-complete even if every $a_i$ satisfies $B/4<a_i<B/2$  \cite[SP16]{GJ79}.
Without loss of generality, we assume that $a_i>3$ for each $i$ and $B=3mB'$ for a positive integer $B'$.
%
Then, the set $\calS_1$ consists of $3m$ rectangles of size $1\times (a_i+3m^2)$ for each $i=1,2,\ldots,3m$.
Furthermore, $\calS_2$ consists of $3m$ congruent rectangles of size $m\times (B/(3m)+3m)$.
The construction can be completed in a polynomial time of $m$.

Subsequently, we observe that $3m<B/(3m)+3m<B/4+3m^2<a_i+3m^2$ for each $i$,
as $3m$ pieces exist in $\calS_1$, $a_i>3$, and $a_i>B/4$.
That is, (1) $B/(3m)+3m$ is larger than $3m$, which is the number of long and slender rectangles in $\calS_1$, and
(2) each $a_i+3m^2$ cannot fit into any rectangle that is formed by the pieces in $\calS_2$ except
if a rectangle with a width of $m$ is created.
Therefore, the only means of forming the same shape using the pieces in $\calS_1$ and the pieces in $\calS_2$
is to form a rectangle with a size of $m\times(B+9m^2)$ using the pieces in $\calS_2$
and to pack long rectangles with a size of $1\times (a_i+3m^2)$ into this frame.
The arrangement of pieces in $\calS_1$ directly provides the solution to the original instance of the 3-partition problem.
\qed\end{proof}

\section{Undecidability of Common Multiple Shape Puzzle}
\label{sec:undecidable}

In this section, we demonstrate that the common multiple shape puzzle is undecidable.

\subsection{Undecidability of a generalized jigsaw puzzle}

We first consider a generalized jigsaw puzzle.
We borrow several notions from \cite{BDDHMY2017}. Each piece is a square with four edges and has its own color.
We denote the set of colors as $C=\{0,1,2,\ldots,c,\bar{1},\bar{2},\ldots,\bar{c}\}$.
In our jigsaw puzzle, we tile the pieces into a rectangular frame so that
each edge is shared by two adjacent pieces with colors $i$ and $\bar{i}$, except for the boundary of the frame.
A special color $0$ exists, which should match to the frame. 
That is, when we tile the pieces, the outer boundary has the color $0$, and no inside edge has the color $0$.

In our jigsaw puzzle, we are allowed to use copies of a piece in $\calS$ multiple times, 
which is the significant difference between our puzzle and that in \cite{BDDHMY2017}.
Therefore, for a given finite set $\calS$, we have infinitely countable means of tiling the pieces.
Subsequently, the jigsaw puzzle problem is defined as follows:

\begin{description}
\item[Input:]  A set $\calS$ of unit square pieces such that each piece has four colors in $C$ on its four edges.
\item[Output:] Decide if there is a polyomino region $R$ such that 
$R$ can be tiled by copies of pieces in $\calS$ in which each inner edge is shared by 
two adjacent pieces of colors $i$ and $\bar{i}$ (with $i>0$), and each edge on the boundary $\partial R$ has the color $0$.
\end{description}

We first present the following lemma.

\begin{lemma}
\label{lem:rect}
There exists a finite set $\calS$ of jigsaw puzzle pieces such that
the area $R$ is tiled by copies of the pieces in $\calS$ with the boundary color $0$ along $\partial R$
if and only if $R$ is a rectangle with a size of at least $3\times 3$.
\end{lemma}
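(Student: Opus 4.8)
The plan is to realize $\calS$ as a \emph{forced rectangle} edge-matching set: a constant number of square tile types, placed in fixed orientation, whose only assemblies bearing color $0$ all along the boundary are the axis-parallel rectangles of side at least $3$. Besides the single boundary color $0$ I would use four pairwise disjoint families of matched internal colors, $H_{\mathrm N}$, $H_{\mathrm S}$, $H_{\mathrm M}$ for the vertical (cell-side) edges occurring in, respectively, the top row, the bottom row and every other row, and $V$ for the horizontal (cell-top/bottom) edges; each color of $H_{\mathrm N}$, $H_{\mathrm S}$, $H_{\mathrm M}$ and of $V$ additionally records a value in $\{1,2\}$. The tile types are a west-edge tile (left edge $0$, right edge carrying value $1$ of $H_{\mathrm M}$), an east-edge tile (right edge $0$, left edge carrying value $2$), an interior tile (top and bottom not $0$, side edges in $H_{\mathrm M}$, obeying the rule ``value $k$ on the left $\mapsto$ value $\min(k+1,2)$ on the right'' and copying the $V$-value from bottom to top), the analogous north-edge tile (top $0$, side edges in $H_{\mathrm N}$, bottom reading value $2$ of $V$) and south-edge tile (bottom $0$, side edges in $H_{\mathrm S}$, top carrying value $1$ of $V$), together with one tile for each of the four corners, each having two incident boundary edges and the corresponding ``start'' values. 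The two properties that drive the analysis are that \emph{no} tile carries color $0$ on two opposite edges or on three or four edges, and that colors of $H_{\mathrm N}$ occur \emph{only} on side edges of tiles whose top edge is $0$ (symmetrically, $H_{\mathrm S}$ only on side edges of tiles whose bottom edge is $0$).

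The ``if'' direction is then routine: for $w\times h$ with $w,h\ge 3$ one places the four corner tiles at the corners, the edge tiles along the four sides and interior tiles inside, with the $H$-value reading $1,2,2,\dots,2$ along each row and the $V$-value reading $1,2,2,\dots,2$ up each column; all adjacencies match, the boundary is entirely $0$, no inner edge is $0$, and the east-/north-edge tiles see the required value $2$ precisely because $w,h\ge 3$.

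For ``only if'', let $R$ be tileable. Since no tile has $0$ on opposite edges, no cell of $R$ has both its left and right (or both its top and bottom) edge on $\partial R$; since no tile has three or four $0$-edges, every vertex of $\partial R$ has at least two of its four surrounding cells in $R$. Hence each vertex of $\partial R$ is a \emph{convex corner} (one surrounding cell in $R$; a $+90^{\circ}$ turn), a \emph{reflex corner} (three surrounding cells in $R$; a $-90^{\circ}$ turn), or a straight pass-through. The heart of the proof is that \emph{no reflex corner can occur}: at a reflex corner the cells $e,c_1,c_2\in R$ form an ``L'' around a missing cell $m\notin R$ that is diagonal to the ``elbow'' $e$; taking $m$ north-east of $e$ (the other three orientations are symmetric, the two ``$m$ south of $e$'' cases using $H_{\mathrm S}$), the cell $c_1$ directly below $m$ has its top edge on $\partial R$, so that edge is $0$ and $c_1$ is a north-edge tile or the north-west/north-east corner; since $c_1$'s left edge faces $e\in R$ it is not $0$, hence it is colored in $H_{\mathrm N}$; but then $e$, glued to $c_1$ across that edge, has a side (right) edge colored in $H_{\mathrm N}$, so by construction $e$'s top edge is $0$ — contradicting that it faces $c_2\in R$. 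Consequently $\partial R$ (a disjoint union of simple closed rectilinear curves, $R$ being a polyomino) consists of curves with only $+90^{\circ}$ turns, hence rectangles; since a hole of $R$ would itself produce a reflex corner of $R$, there are no holes, and $R$ is a single rectangle. Finally, a rectangle of width $\le 2$ cannot be tiled (the two top-row corner tiles demand matching colors carrying the values $1$ and $2$, which are not a matched pair, and a width-$1$ strip has no admissible tile), so $w\ge 3$, and symmetrically $h\ge 3$; thus $R$ is a rectangle of side at least $3$.

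The step I expect to be the obstacle is the exclusion of reflex corners — equivalently, arranging that no ``notch'' in $R$ can ever be tiled. The device of giving the extreme rows their own, mutually disjoint side-edge alphabets (so that a tile forced into the top row cannot have a neighbour below it) is what makes that argument purely local; what remains delicate is the bookkeeping of the two $\{1,2\}$-counters across the corner, edge and interior types, and checking that no rectangle of side at most $2$ survives.
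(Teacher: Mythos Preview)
Your approach is essentially the paper's: both build a fixed-orientation tile set in which boundary-adjacent edges carry colours from a different alphabet than interior edges, and use this to rule out reflex ($270^\circ$) corners and sub-$3$ side lengths. Your reflex-corner argument via the $H_{\mathrm N}/H_{\mathrm S}$ alphabets is in fact more explicit than the paper's, which simply observes that a $270^\circ$ corner would force a boundary piece (hence a $0$-edge) into the interior.

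There is, however, a concrete bug in your tile set. You specify that the interior tile \emph{copies} the $V$-value from bottom to top, while the south-edge tile emits $V$-value~$1$ and the north-edge tile demands $V$-value~$2$. With copying, every interior column is stuck at value~$1$ all the way up, so the north-edge tile can never be placed and \emph{no} rectangle of height $\ge 3$ is tileable --- contradicting your own ``if'' direction, where you (correctly) want the $V$-values to read $1,2,2,\dots,2$ up each column. The fix is to let the interior tile apply the same $k\mapsto\min(k+1,2)$ rule to $V$ as to $H$; but then you must also spell out the $V$-behaviour of the west/east-edge tiles and the $H$-behaviour of the north/south-edge tiles (each needs two variants, for incoming value $1$ or $2$), so the tile inventory grows beyond what you listed. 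The paper sidesteps this bookkeeping by using three colours $H,H',h$ (resp.\ $V,V',v$) that directly encode ``adjacent to one boundary / adjacent to the other / interior'', which gives the $\ge 3$ bound without a separate counter. A smaller slip: the sentence ``every vertex of $\partial R$ has at least two of its four surrounding cells in $R$'' is false (a convex corner has exactly one) and is immediately contradicted by your own next sentence; just delete it.
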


\begin{figure}[thb]\centering
\includegraphics[bb=0 0 161 161,width=0.25\textwidth]{./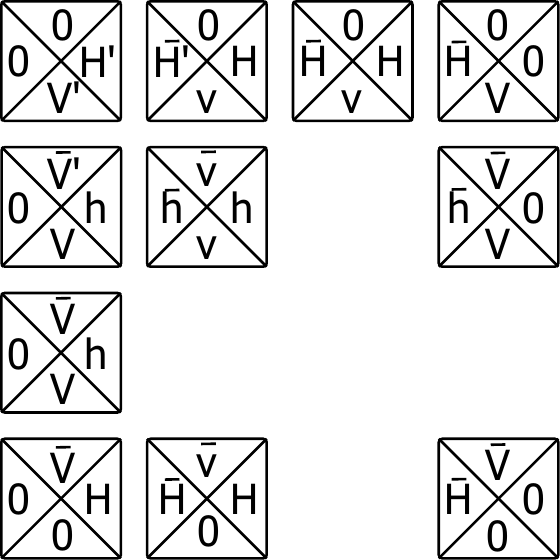}
\caption{Jigsaw puzzle in a rectangle}
\label{fig:rect}
\end{figure} 

\begin{proof}
We consider the set of 11 jigsaw puzzle pieces depicted in \figurename~\ref{fig:rect}.
For ease of reference, we use some letters such as $H$, $V$, etc.~instead of the numbers $1$, $2$, etc.~in the figure.
As every piece contains $H$, $\bar{H}$, $H'$, $\bar{H'}$, $h$, or $\bar{h}$,
without loss of generality, we can assume that one piece is placed so that its $\bar{H}$, $\bar{H'}$, or $\bar{h}$ is on its left,
and all the pieces are then aligned in the same direction, as indicated in the figure.
The boundary of the jigsaw puzzle is labeled by $0$.
We observe that we cannot form a rectangle with a size of $2\times n$ (and $n\times 2$) for any $n$ 
because $V'$ and $\bar{V}$ (and $H'$ and $\bar{H}$) do not match.
Furthermore, we observe that an edge is colored by $h$ or $v$ if and only if it is not incident to a vertex on the boundary.
Therefore, we have no other means of forming a rectangle by filling the copies of the boundary pieces with the color $0$.
In particular, we cannot create a corner with the angle $270^\circ$; to achieve this, we must place one ``corner boundary'' piece inside,
which results in the color $0$ being inside the shape, and this is not permitted.
\qed\end{proof}

We show that our jigsaw puzzle problem is undecidable.

\begin{lemma}
\label{lem:tile}
There exists a finite set $\calS$ of pieces of the jigsaw puzzle such that the jigsaw puzzle is undecidable.
\end{lemma}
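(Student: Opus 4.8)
The plan is to reduce from a known undecidable tiling problem — the most natural candidate being the (origin-constrained or ordinary) tiling problem for Wang tiles, or equivalently the halting problem for Turing machines encoded via tilings. I would start from the classical fact that it is undecidable whether a given finite set of Wang tiles can tile the first quadrant (or can tile any rectangle, or tiles the plane), and transport this into the jigsaw-with-copies setting. The key observation enabling the reduction is Lemma~\ref{lem:rect}: we already have a gadget set $\calS_{\mathrm{rect}}$ of jigsaw pieces whose only legal boundary-$0$ tilings are rectangles of size at least $3\times 3$. So the idea is to superimpose, on each unit square, \emph{two independent layers of color information}: one ``frame layer'' using the palette of Lemma~\ref{lem:rect} that forces the tiled region $R$ to be a rectangle, and one ``content layer'' carrying a Wang-tile color on each of the four edges. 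A piece of the combined set is then a pair (frame-role, Wang tile), and edge-compatibility in the combined puzzle means simultaneous compatibility in both layers; the boundary color $0$ of the combined alphabet is the pair (frame-boundary-$0$, appropriate quadrant/boundary marker of the Wang system).

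The main steps, in order, would be: (1) fix a finite Wang tile set $T$ for which ``does $T$ tile some rectangle with a prescribed boundary condition?'' is undecidable — this can be obtained by a standard transformation from the Turing-machine halting problem, where a terminating computation corresponds exactly to a finite computation rectangle whose top and bottom rows are special halting/initial markers; (2) build the frame layer from $\calS_{\mathrm{rect}}$ so that any boundary-$0$ tiling of a region is forced to be a genuine axis-aligned rectangle of dimensions at least $3\times 3$, and observe that because we allow copies, the rectangle can be arbitrarily large, which is exactly what a variable-size computation history needs; (3) form the product set $\calS = \{(f,t) : f \in \calS_{\mathrm{rect}},\, t \in T,\ \text{compatible roles}\}$, being careful to encode, still via colors on edges, the boundary conditions of the Wang system (e.g.\ forcing the bottom edge-row to read an initial configuration and the top edge-row to read a halting configuration, and the left/right columns to carry blank-tape markers) by only admitting, along the frame-boundary positions, those content colors that are pair-compatible with color $0$; (4) argue the two directions of the equivalence — a boundary-$0$ tiling of the combined set yields a rectangle (frame layer) whose content layer is a valid Wang tiling realizing an accepting computation, and conversely an accepting computation of the machine gives a rectangle, which we color in the frame layer by Lemma~\ref{lem:rect} and in the content layer by the computation tableau; (5) conclude that deciding the jigsaw puzzle for $\calS$ would decide halting, so it is undecidable, proving Lemma~\ref{lem:tile}.

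The delicate point — and the main obstacle — is the interface between the two layers at the boundary. In Lemma~\ref{lem:rect} the frame pieces are split into ``corner'', ``edge'', and ``interior'' roles, and the content-layer boundary conditions of the Wang system must be installed precisely on the edge/corner roles without allowing any content color to leak into the interior in a way that would let the machine ``cheat.'' Concretely, I need the set of content colors allowed on a frame-boundary edge (paired with frame-color $0$) to be exactly the set of legal exterior colors of the Wang tiling — the initial-row symbols on the bottom edges, the halting-row symbols on the top edges, the blank-tape markers on the side edges — and these must be mutually consistent at the four corners. Getting the corner pieces to simultaneously satisfy both layers, and ensuring the $3\times 3$ minimum from Lemma~\ref{lem:rect} does not conflict with very short computations (which one can pad), is the bookkeeping-heavy part. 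A secondary subtlety is that our puzzle asks for a tiling of \emph{some} polyomino region rather than a prescribed one; the frame layer already removes this freedom by Lemma~\ref{lem:rect}, so the reduction is sound, but this is the step where that lemma is truly load-bearing and must be invoked explicitly.

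I would therefore present the proof as: ``By Lemma~\ref{lem:rect} we may assume the tiled region is a rectangle; overlay Wang colors as a second coordinate on each edge; invoke undecidability of bounded Wang tiling / TM halting; done.'' The only genuine work is specifying the product alphabet and checking the boundary/corner compatibility, which I expect to handle with a small explicit figure analogous to \figurename~\ref{fig:rect} rather than a long case analysis.
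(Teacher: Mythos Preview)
Your approach is correct but takes a genuinely different route from the paper. The paper reduces directly from the Post Correspondence Problem: for a PCP instance $(t_1;b_1),\ldots,(t_n;b_n)$ it builds jigsaw pieces so that the top row of any valid tiling spells a concatenation $t_{i_1}\cdots t_{i_k}$, the bottom row spells $b_{i_1}\cdots b_{i_k}$, and a system of ``ID wires'' (each making zero or exactly two $90^\circ$ turns) connects every $t_{i_j}$-block to its partner $b_{i_j}$-block without crossings, while letter-colors propagate vertically through the interior to enforce equality of the two concatenated strings; the rectangular frame is realised by six dedicated border pieces included in the same set, so Lemma~\ref{lem:rect} is \emph{not} invoked inside this proof at all. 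Your Wang-tile/halting route is more modular---one layer forces the rectangle via Lemma~\ref{lem:rect}, the other runs the computation---and sidesteps the wire-routing combinatorics, at the cost of pushing the real work into the TM-to-Wang encoding and into the boundary-interface bookkeeping you already flagged (collapsing four distinct Wang boundary roles onto the single jigsaw color~$0$, which you correctly propose to handle by restricting which Wang tiles may pair with each frame-layer corner/edge piece rather than by encoding anything on the $0$-edges themselves). Both arguments yield the lemma; the paper's choice keeps Lemmas~\ref{lem:rect} and~\ref{lem:tile} logically independent, which is tidier for the way they are later deployed as the two \emph{separate} piece sets $\calS_1$ and $\calS_2$ in the main theorem, whereas yours would give a shorter write-up once the bounded Wang-tiling undecidability is taken as a black box.
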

\begin{proof}(Outline.)
We present a polynomial-time reduction from the following Post correspondence problem:
\begin{description}
\item[Input:] A sequence of pairs of strings $s_1=(t_1;b_1),s_2=(t_2;b_2),\ldots,s_{n}=(t_{n},b_{n})$.
\item[Output:] For a pair $s_i=(t_i;b_i)$ of strings, we define $T(s_i)=t_i$,$B(s_i)=b_i$.
	   We decide if there exists a sequence of pairs $s_{i_1},s_{i_2},s_{i_3},\ldots,s_{i_k}$ of strings
	   such that $T(s_{i_1})T(s_{i_2})T(s_{i_3})\cdots T(s_{i_k})=B(s_{i_1})B(s_{i_2})B(s_{i_3})\cdots B(s_{i_k})$.
\end{description}
Let $\Sigma$ be an alphabet, namely the set of letters that is used in the sequence.
We note that we can use each pair $s_i$ can be used any number of times.
It is well known that the Post correspondence problem is undecidable even if $\msize{\Sigma}$ is a constant \cite{Post1946}.

\begin{figure}[thb]\centering
\includegraphics[bb=0 0 540 375,width=0.9\textwidth]{./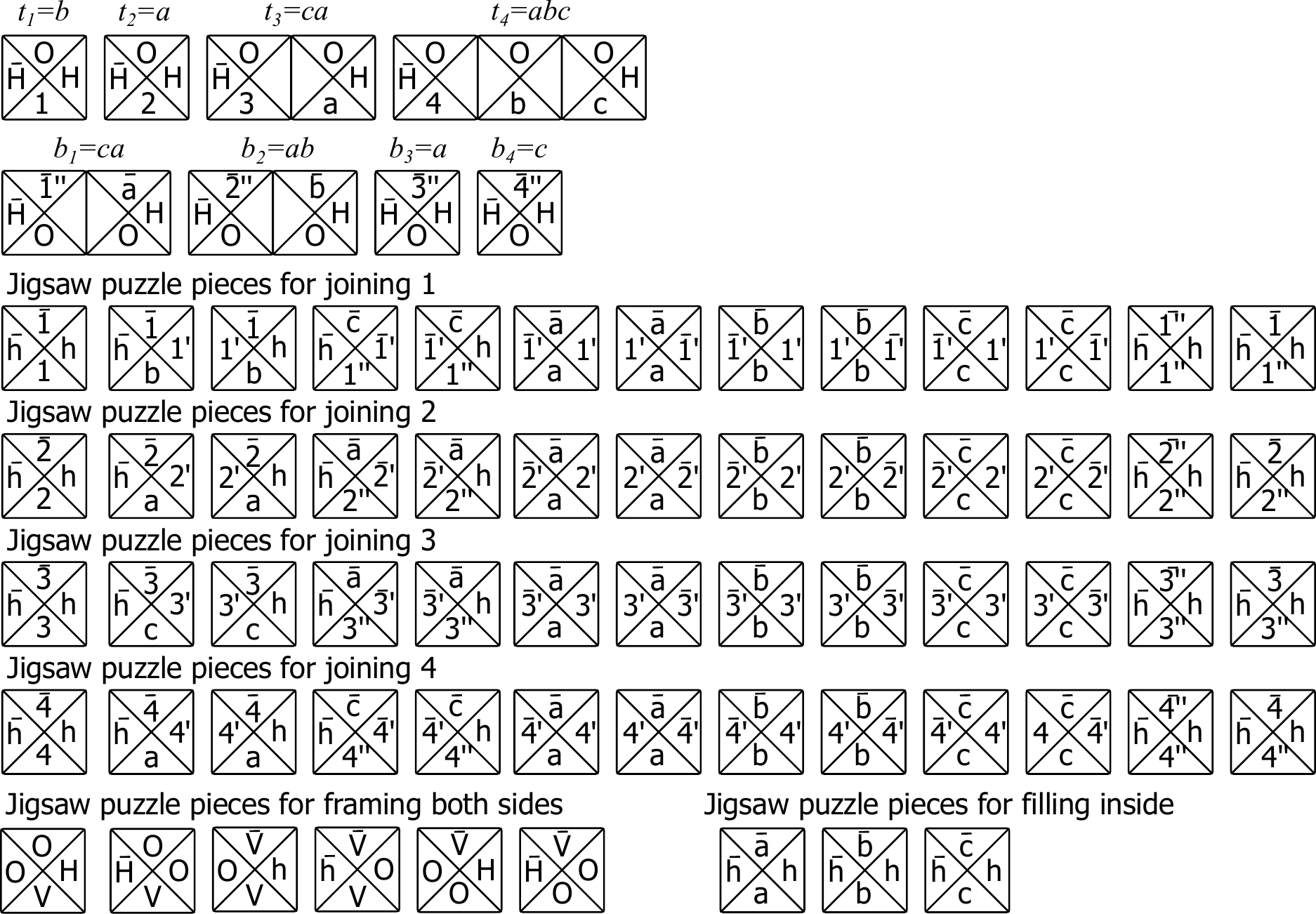}
\caption{A reduction from the Post correspondence problem to the jigsaw puzzle problem}
\label{fig:post-sample}
\end{figure} 

We demonstrate the reduction by using a concrete example $\Sigma=\{a,b,c\}$, $s_1=(b;ca),s_2=(a;ab),s_3=(ca;a),s_4=(abc;c)$ (\figurename~\ref{fig:post-sample}).
We prepare one piece, one piece, two pieces, and three pieces of jigsaw puzzle for each string 
$t_1=b$, $t_2=a$, $t_3=ca$, and $t_4=abc$, respectively.
We set each string to be uniquely constructible: two pieces for $t_3=ca$ have their own color (distinct from any other color) between them,
and three pieces for $t_4=abc$ have their own two colors between $a$ and $b$, and $b$ and $c$ (these are blank in \figurename~\ref{fig:post-sample}).
The top color is $0$, the left color is $\bar{H}$, and the right color is $H$ for each piece.
(As in the proof of Lemma \ref{lem:rect}, we regard certain letters as numbers greater than $n$.)
Hereafter, we consider these strings as rectangular pieces that are represented by sizes of $1\times 1$, $1\times 1$, $2\times 1$, and $3\times 1$, respectively.
The leftmost bottom color of the rectangular piece $t_i$ is the color $i$, which is referred to as the \emph{ID} of this rectangular piece.
The color of the other edge corresponds to the letter in the string.
That is, the second and the third pieces of the rectangle representing $t_4=abc$ have the colors $b$ and the color $c$, respectively.
(We regard these letters as unique numbers in the color set $C$.
As the size of the alphabet $\Sigma$ is a constant, regarding these letters as numbers has no influence on our arguments.)

Subsequently, we prepare two, two, one, and one pieces for the strings $b_1=ca$, $b_2=ab$, $b_3=a$, and $b_4=c$, respectively.
As with the strings $t_i$, $b_1,b_2,b_3,b_4$ each corresponds to a rectangular piece with a size of $2\times 1$, 
$2\times 1$, $1\times 1$, and $1\times 1$, respectively.
The bottom color is $0$, the left color is $\bar{H}$, and the right color is $H$ for these rectangular pieces.
The top colors of the rectangular piece are represented by the letters, except for the leftmost edge,
which has the color ID $\bar{i''}$ for the string $b_i$.

Next, we prepare to join two pieces with the IDs $i$ and $i''$.
Hereafter, we use $(c_u,c_b,c_l,c_r)$ to denote the top color $c_u$, bottom color $c_b$, left color $c_l$, and right color $c_r$ of a piece. 
Furthermore, we assume that the top letter of $t_i$ is $x_i$ and the top letter of $b_i$ is $y_i$.
We first prepare a piece with colors $(\bar{i},i,\bar{h},h)$, which is a wire of the ID in the vertical direction.
We also prepare a piece with colors $(\bar{i},x_i,\bar{h},i')$, which turns the ID to the right,
and a piece with colors $(\bar{i},x_i,\bar{i'},h)$, which turns the ID to the left.
The ID is turned to the right or left using one of these pieces and runs horizontally.
The prime symbol $'$ means that the ID turns once.
Thereafter, we prepare two pieces with the colors $(\bar{y_i},i'',\bar{h},i')$ and $(\bar{y_i},i'',\bar{i'},h)$ to turn the ID downwards.
In this case, the symbol $''$ means that the ID turns twice.
Furthermore, we prepare a piece with the color $(\bar{j},j,\bar{i'},i')$ for each letter $j\in\Sigma$ to propagate 
the ID in the horizontal direction.
We also add a piece with the color $(\bar{i''},i'',\bar{h},h)$ to pass the ID downwards after turning twice.
In a special case, an ID can directly move from top to bottom without turning.
We prepare a piece with the color $(\bar{i},i'',\bar{h},h)$ to deal with this case.
Thus, we prepare a total of eight pieces for the IDs $i$ and $i''$.

Subsequently, we prepare pieces to form the left and right sides of the rectangular frame.
We prepare six pieces with the colors 
$(0,V,0,H)$, $(0,V,\bar{H},0)$, $(\bar{V},V,0,h)$, $(\bar{V},V,\bar{h},0)$, $(\bar{V},0,0,H)$, and $(\bar{V},0,\bar{H},0)$.
Finally, we prepare pieces $(\bar{j},j,\bar{h},h)$ for each $j\in \Sigma$ to fill the holes in the frame.

We prepare $\sum_{i=1}^n (\msize{t_i}+\msize{b_i})+8n+6+\msize{\Sigma}$ pieces in total.
Therefore, the jigsaw puzzle can be constructed in polynomial time for 
the size of a given instance of the Post correspondence problem.

We demonstrate that the instance $s_1=(t_1;b_1),s_2=(t_2;b_2),\ldots,s_{n}=(t_{n},b_{n})$ of the Post correspondence problem 
has a solution if and only if the jigsaw puzzle has a solution such that a rectangular area $R$ is filled with copies of 
the pieces with the color $0$ only on $\partial R$.

\begin{figure}[t]\centering
\includegraphics[bb=0 0 345 619,width=0.6\textwidth]{./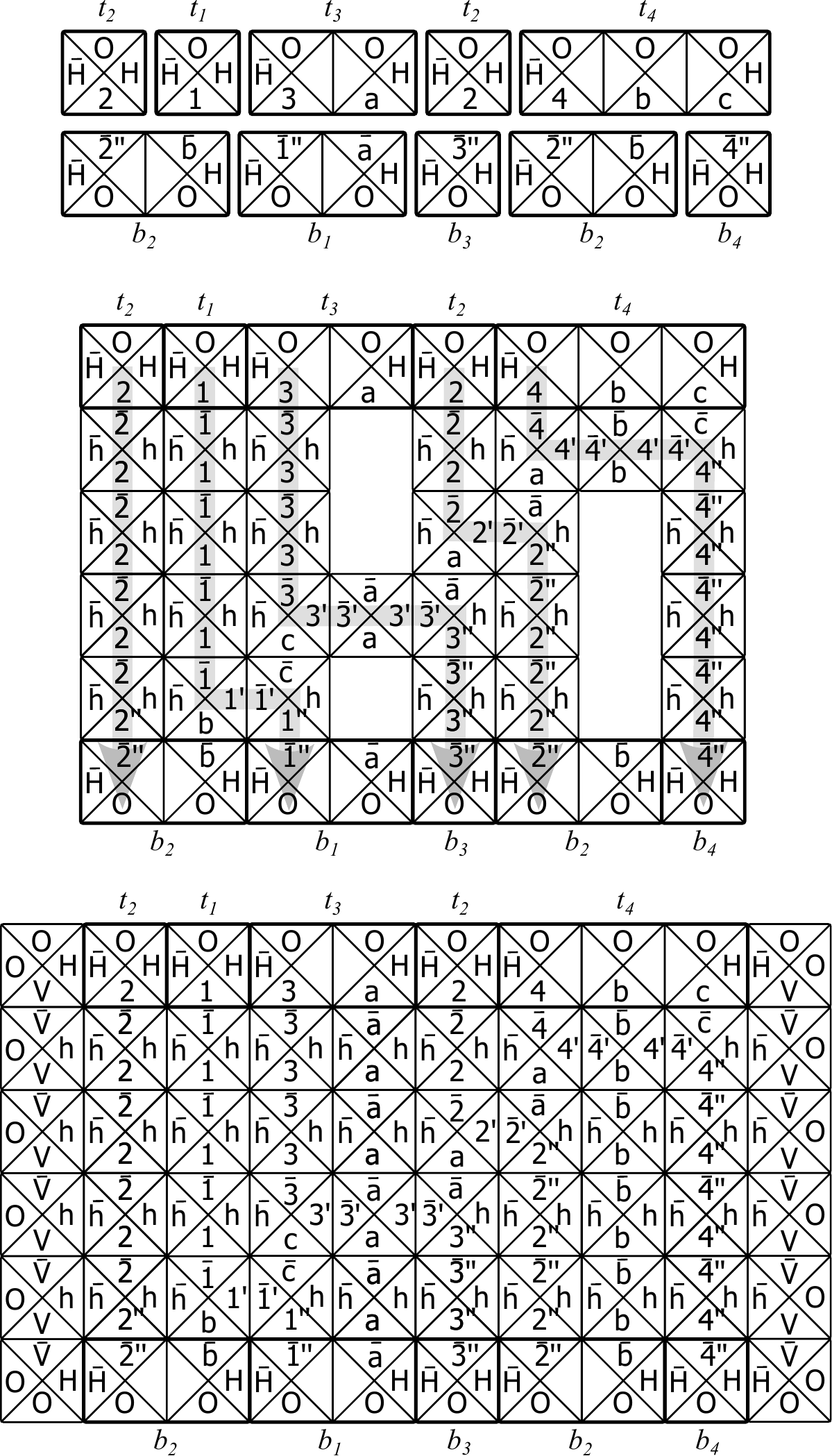}
\caption{Construction of a solution of the jigsaw puzzle from a solution of the instance of the Post corresponding problem}
\label{fig:post-sample-c}
\end{figure} 

We first assume that the sequence $s_{i_1},\ldots,s_{i_k}$ is a solution, from which
we construct a rectangular shape using the set of pieces of the jigsaw puzzle.
We use $s_1=(b;ca),s_2=(a;ab),s_3=(ca;a),s_4=(abc;c)$ as an example (\figurename~\ref{fig:post-sample-c}).
Intuitively, we verify that two corresponding IDs are joined by a zig-zag path with two (or zero) turns,
these zig-zag paths do not cross one another, and the corresponding letters are joined by vertical matching pieces.

We first align the rectangular pieces $t_{i_1}, t_{i_2}, \ldots, t_{i_k}$ on the top row and 
$b_{i_1}, b_{i_2}, \ldots, b_{i_k}$ on the bottom row 
following the solution $s_{i_1},\ldots,s_{i_k}$ of the Post correspondence problem.
(As a reminder, each string $t_i$ (and $s_i$) produces unique rectangular pieces.)
Thus, we obtain $0$s on the top and bottom boundaries, and 
we can join all rectangular pieces by matching $h$ and $\bar{h}$. 
Subsequently, we join all corresponding pairs of IDs using the prepared pieces.
When the gap between the top and bottom rows is sufficiently large,
each joining path for each ID $i$ can be created in one of the following manners:
\begin{enumerate}
 \item The pair of the corresponding IDs $i$ and $i''$ in the same column is directly joined vertically, 
 \item When the ID $i''$ of $b_i$ is left of the ID $i$ of $t_i$, 
       the ID $i$ first moves down vertically, turns left once, moves horizontally, turns right once, and moves downwards to the ID $i''$, and 
 \item When the ID $i''$ of $b_i$ is right of the ID $i$ of $t_i$, 
       the ID $i$ first moves down vertically, turns right once, moves horizontally, turns left once, and moves downwards to the ID $i''$.
\end{enumerate}
Any of these procedures can be performed using the pieces prepared as above.
Note that in the case (2), the first letter $x_i$ of the string $t_i$ 
appears at the corner when we use the piece $(\bar{i},x_i,\bar{i'},h)$ is used to turn left,
and the first letter $y_i$ of the string $b_i$ appears at the corner when $(\bar{y_i},i'',\bar{h},i')$ is used.
In the case (3), the pieces $(\bar{i},x_i,\bar{h},i')$ and $(\bar{y_i},i'',\bar{i'},h)$ are used for this purpose.

\clearpage

Following all the above steps, we can observe that each corresponding pair of IDs is joined by either
a straight vertical path in case (1) or a zig-zag path with two turns in cases (2) or (3).
Moreover, the $i$th letter in the common string that is produced by the sequence $s_{i_1},\ldots,s_{i_k}$ appears 
on all the horizontal edges of the $i$th piece (from left), except its boundary and pieces on the vertical line that join two corresponding IDs.
At this holds even for the holes, we can fill all of the holes using the pieces that have been prepared for filling.
Finally, we can complete the frame by arranging the pieces that have been prepared for the frame with the color $0$ on the boundary.

We assume that the jigsaw puzzle has a solution.
The pieces that correspond to $t_i$ and $b_i$ form the respective rectangles as they have their unique colors.
As all pieces have the color $h$ or $\bar{h}$, therefore, every piece is arranged so that 
$\bar{h}$ appears on the left side and $h$ appears on the right side.
Because the color $0$ matches no other colors, the rectangles for $t_i$ and the corner pieces of color $0$ on the upper edges
are arranged on the top row, as are the rectangles for $b_i$ and the corner pieces of color $0$ on the lower edges.
We need to form a rectangle using the pieces of the color $0$ on the left or right side.
(Although it may appear that we can form any polyomino other than rectangles,
we cannot create any concave corner of $270^\circ$ because an edge with the color 0 cannot be placed inside the polyomino.)

According to the color properties,
the ID color of each rectangle corresponding to $t_i$ should be connected to 
the ID color of each rectangle for $b_i$, and these $k$ paths cannot cross.
If a path has no turn, it is necessary to use some copies of the piece with the color $(\bar{i},i,\bar{h},h)$, 
one copy of the piece $(\bar{i},i'',\bar{h},h)$, and some copies of the piece $(\bar{i''},i'',\bar{h},h)$.
If a path has turns, the only possible solution is that 
the color $i$ of $t_i$ starts vertically, is changed to $i'$ after one $90^\circ$ turn,
moves horizontally, is changed to $i''$ after one $90^\circ$ turn, 
and moves down to the piece in the rectangle corresponding to $b_i$.
The colors of $t_i$ and $b_i$ appear at each turn on a horizontal edge.
Thus, the remaining holes should be packed using the pieces with the color $(\bar{j},j,\bar{h},h)$ for the matching color $j$,
and each pair of IDs of $t_i$ and $b_i$ should match.

Therefore, when the jigsaw puzzle has a solution, the pieces form a rectangle,
the pieces for $t_i$ are arranged on the top, 
the pieces for $b_i$ are arranged on the bottom, 
the corresponding pairs of IDs of $t_i$ and $b_i$ match, and 
a consistent letter is obtained along each vertical line of the pieces.
Thus, $s_i$ can be arranged following the sequence, and 
the same sequence of letters that is produced by the sequences of $t_i$ and $b_i$ can be obtained, 
which provides a solution to the Post correspondence problem, thereby completing the proof.
\qed\end{proof}

Lemmas \ref{lem:rect} and \ref{lem:tile} imply the following Theorem.
\begin{theorem}
For two finite sets $\calS_1$ and $\calS_2$ of small polyominoes,
the common multiple shape puzzle for $\calS_1$ and $\calS_2$ is undecidable.
\end{theorem}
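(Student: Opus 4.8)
The plan is to reduce from the Post correspondence problem, using Lemmas~\ref{lem:rect} and~\ref{lem:tile} essentially as black boxes together with a standard geometric encoding of colored unit squares by ``bumpy'' polyominoes. The two jigsaw sets coming from the two lemmas will never interact: the set obtained from Lemma~\ref{lem:tile} will force the common shape to be a genuine PCP assembly, while the set obtained from Lemma~\ref{lem:rect} will force it to be a rectangle, and only for a yes-instance of the PCP can both requirements hold simultaneously.

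First I would turn each colored unit square into a polyomino. Fix a scale \(N\) that is polynomial in the number of colors of the jigsaw instance, and replace each \(1\times 1\) piece by an \(N\times N\) block of cells; on each of the four sides of the block carve a small bump/dent pattern that encodes the color of the corresponding edge, so that complementary colors \(i\) and \(\bar i\) (with \(i>0\)) receive a bump and the matching dent, and two such edges can be placed adjacently exactly when their colors match. The special color \(0\) is encoded by an outward bump whose profile matches \emph{nothing} --- not even another color-\(0\) edge --- so that every color-\(0\) edge is forced onto the boundary of whatever shape is assembled. To stop copies from being rotated or reflected, I would reserve a disjoint band of positions for each side's pattern (top, bottom, left, right use different ranges) and attach a small asymmetric key; a routine argument then shows that in any tiling by copies of the encoded pieces the blocks are aligned to a common \(N\)-lattice and every copy is a translate of an original piece. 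With these conventions a polyomino \(X\) is tileable by copies of the encoding of a jigsaw set \(\calS\) if and only if the unscaled copy of \(X\) is a region \(R\) that \(\calS\) tiles as a jigsaw with color \(0\) exactly along \(\partial R\). Since \(N\) and the number of pieces are polynomial in the input, the resulting polyominoes are small.

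Now the reduction: given a PCP instance, let \(\calS_1\) be the encoding of the jigsaw set \(\calS\) built from it in Lemma~\ref{lem:tile} (after a routine padding making the forced rectangle have size at least \(3\times 3\); a yes-instance has arbitrarily long solutions, so this is harmless), and let \(\calS_2\) be the encoding of the fixed jigsaw set of Lemma~\ref{lem:rect}, using the same encoding of color \(0\) in both so that ``boundary shapes'' coincide. If the PCP instance is a yes-instance, Lemma~\ref{lem:tile} yields a rectangle \(R\) of size at least \(3\times 3\) tiled by \(\calS\) with color \(0\) on \(\partial R\); its polyomino encoding is a scaled rectangle lying in \(\hat{\calS_1}\), and since \(R\) is a rectangle of size at least \(3\times 3\), Lemma~\ref{lem:rect} tiles \(R\) with color \(0\) on the boundary, so that same scaled rectangle also lies in \(\hat{\calS_2}\). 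Conversely, any \(X\in\hat{\calS_1}\cap\hat{\calS_2}\) is, by membership in \(\hat{\calS_2}\) together with faithfulness of the encoding and Lemma~\ref{lem:rect}, a scaled rectangle of size at least \(3\times 3\); membership in \(\hat{\calS_1}\) then exhibits a jigsaw tiling of this rectangle by \(\calS\) with color \(0\) on its boundary, and Lemma~\ref{lem:tile} converts it into a solution of the PCP instance. Undecidability of the PCP~\cite{Post1946} finishes the proof; the stated corollary on rectangles is read off from the \(\calS_2\)-side alone.

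The main obstacle is the faithfulness of the geometric encoding: one must rule out cheating tilings that exploit rotated or reflected copies, or unintended edge adjacencies --- above all adjacencies involving color \(0\), which is precisely why color \(0\) is encoded so as not to match even itself, confining it to the boundary. Making the orientation-fixing markers and the color-\(0\) profile genuinely inert, and checking that bumped blocks are forced onto a single lattice, while keeping the scale \(N\) polynomial, is where the care goes; the rest is bookkeeping on top of the two lemmas.
\qed
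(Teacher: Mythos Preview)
Your proof is correct and follows essentially the same approach as the paper's: encode the colored jigsaw pieces of Lemmas~\ref{lem:rect} and~\ref{lem:tile} as bump/dent polyominoes sharing only the color-$0$ profile, so that one set forces the common shape to be (a scaled) rectangle and the other encodes the PCP. The only differences are cosmetic---you swap the roles of $\calS_1$ and $\calS_2$, use a merely polynomial scale $N$ rather than the paper's $O((\log|C|)^2)$ binary zig-zag, and you are more explicit than the paper about orientation-fixing and the need for color~$0$ not to match itself.
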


\begin{figure}[thb]\centering
\includegraphics[bb=0 0 290 68,width=0.8\textwidth]{./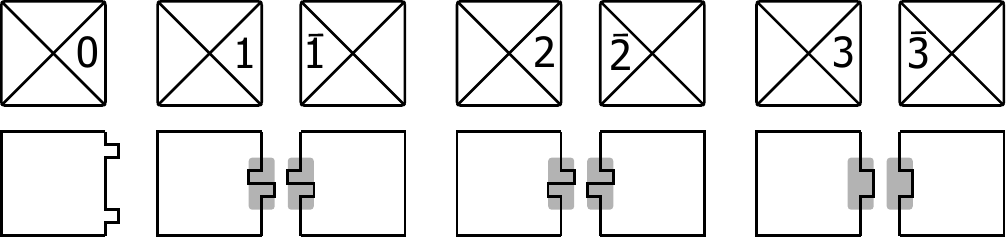}
\caption{Colored jigsaw piece for a polyomino. Each color $i$ corresponds to a zigzag pattern
that represents the integer $i$ in the binary system. The color $\bar{i}$ is its negative.}
\label{fig:jigsaw}
\end{figure}

\begin{proof}(Outline.)
We first demonstrate how to represent each piece of the jigsaw puzzle in Lemmas \ref{lem:rect} and \ref{lem:tile} using a small polyomino.
The basic concept is explained in \cite[Fig.~7]{DemaineDemaine2007}.
Each color is represented by its original zig-zag pattern.
See \figurename~\ref{fig:jigsaw} for an example of the representation.
Using the binary system, the size of the polyomino is $O((\log \msize{C})^2)$, where $C$ is the set of colors.

We consider the set $\calS_1$ of jigsaw pieces in Lemma \ref{lem:rect}
and the set $\calS_2$ of jigsaw pieces in Lemma \ref{lem:tile}.
Different colors can be used for each set, except the common color $0$.
Subsequently, according to Lemma \ref{lem:rect}, a solution to the common multiple shape puzzle is a shape that corresponds to a rectangle.
Moreover, according to Lemma \ref{lem:tile}, whether it can be constructed using the pieces in $\calS_2$ is undecidable.
The number of colors used in $\calS_1\cup\calS_2$ is linear with the size $n$ of the input.
Thus, each polyomino has an area of $O((\log n)^2)$, which means that it is small. This completes the proof.
\qed\end{proof}

\section{Improved Solutions for Common Multiple Shapes}
\label{sec:implementation}

In this section, we provide a brief formulation of generalized common shape puzzles.
The rep-tile problem, which is a type of packing puzzle on polyominoes, 
has been formulated and examined using several different computer methods \cite{BHHM2021} recently.
In \cite{BHHM2021}, the authors demonstrated that the rep-tile problem can be formulated in a natural form
that can be handled using various methods.
They compared a well-known puzzle solver, a few algorithms based on dancing links, 
an MIP solver, and a SAT-based solver with respect to for solving the packing puzzles.
In \cite{BHHM2021}, the authors concluded that the SAT-based solver is significantly faster than the other methods.
Therefore, we examined several instances 
that are available online,\footnote{\url{https://www.iread.it/Poly/} and  \url{https://sicherman.net/polycur.html}.}
 and improved some of the known results. 
The notation for small polyominoes follows ones given in these wab pages.
(They are rather trivial as you can see in the patterns.)

\begin{figure}[thb]\centering
\includegraphics[bb=0 0 950 800,width=0.3\textwidth]{./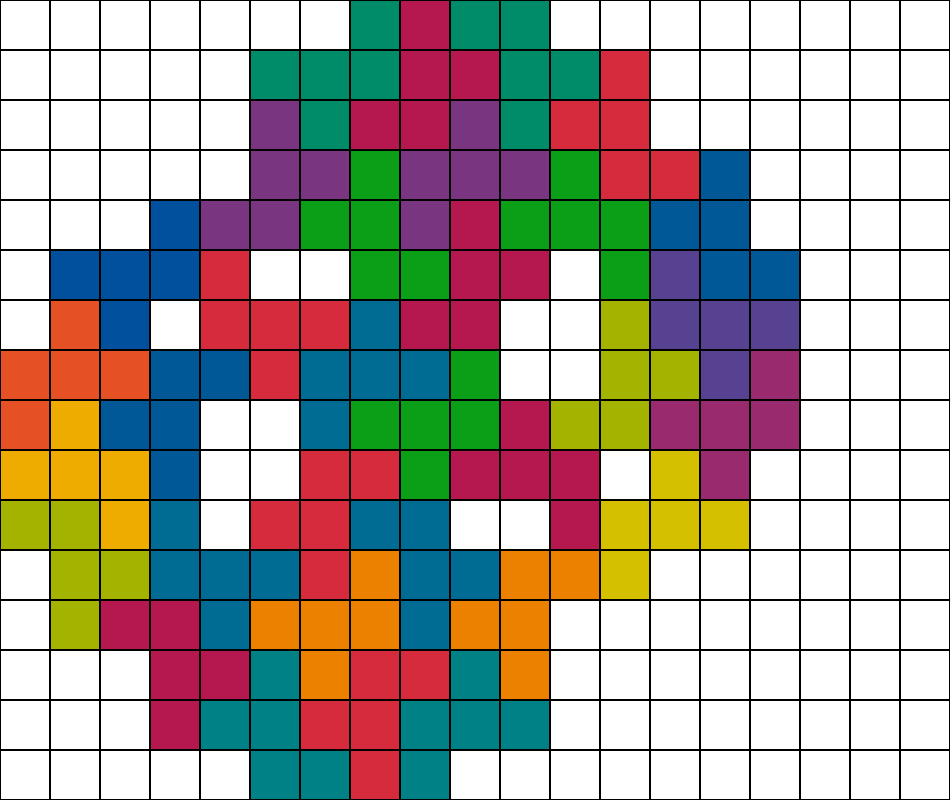}
\includegraphics[bb=0 0 950 800,width=0.3\textwidth]{./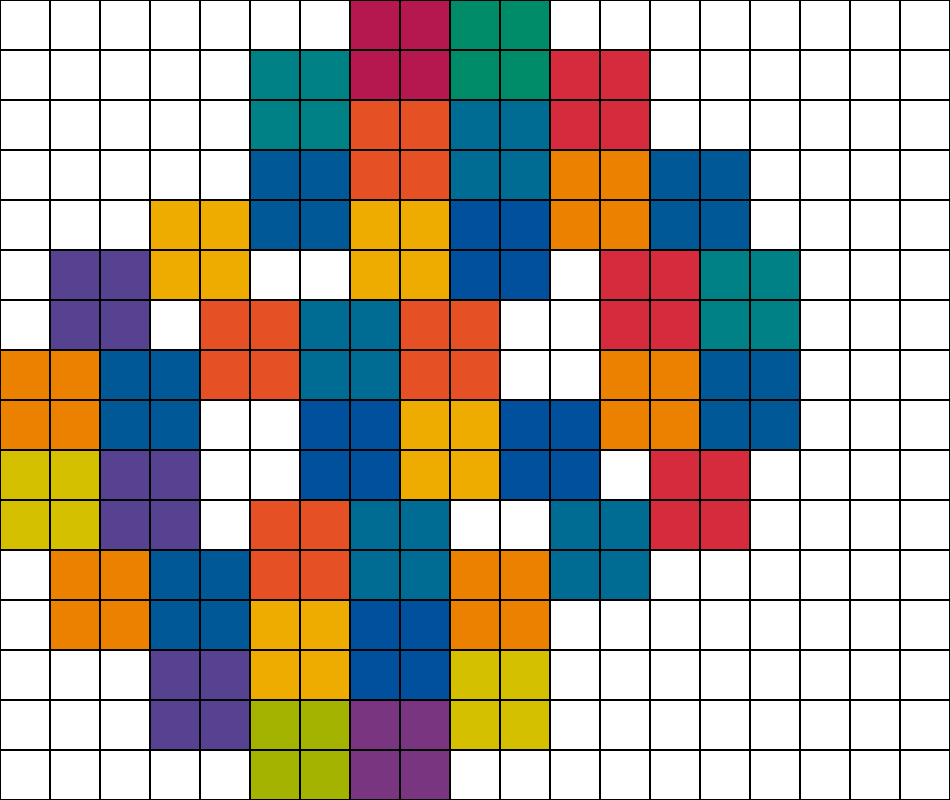}
\includegraphics[bb=0 0 950 800,width=0.3\textwidth]{./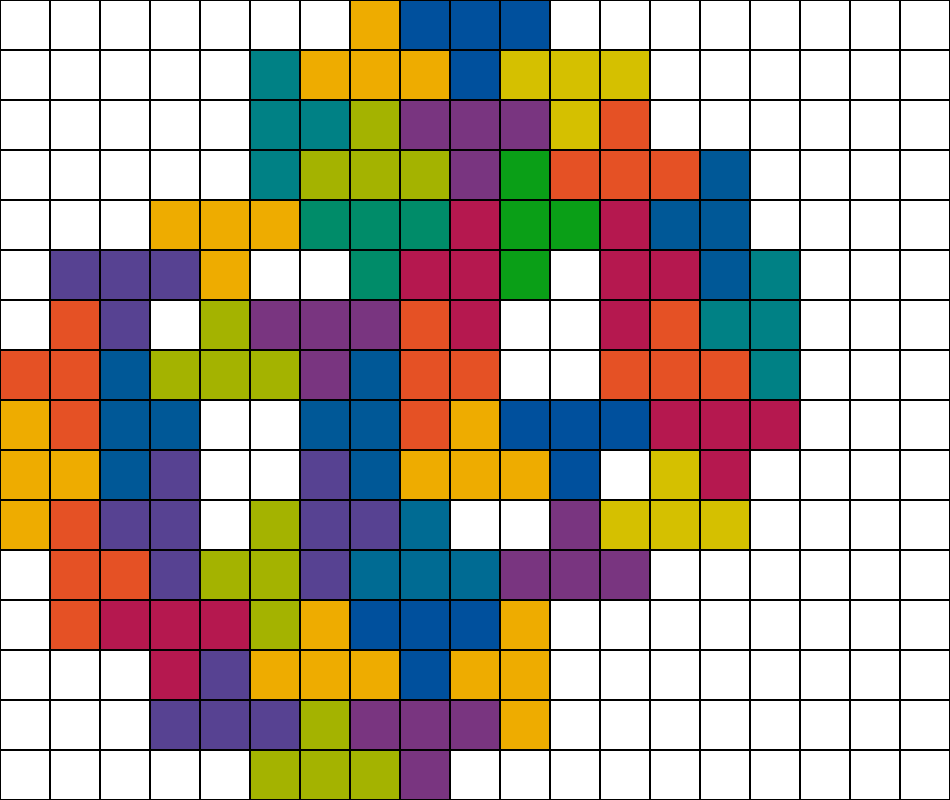}
\caption{Tiling patterns for F5Q4T4 improved from 760-omino to 160-omino}
\label{fig:F5-O4-T4}
\end{figure} 

\begin{figure}[thb]\centering
\includegraphics[width=0.3\textwidth]{./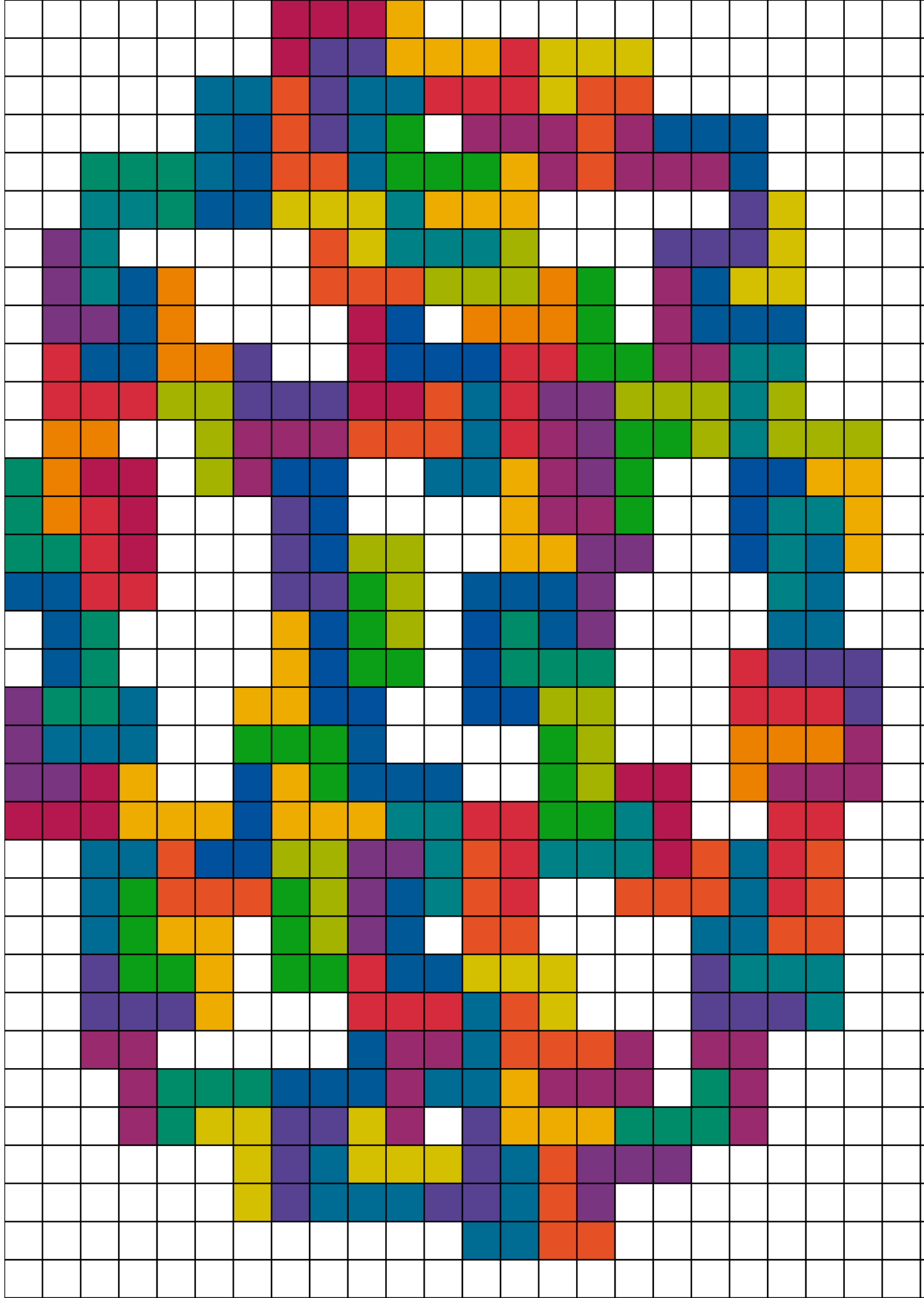}
\includegraphics[width=0.3\textwidth]{./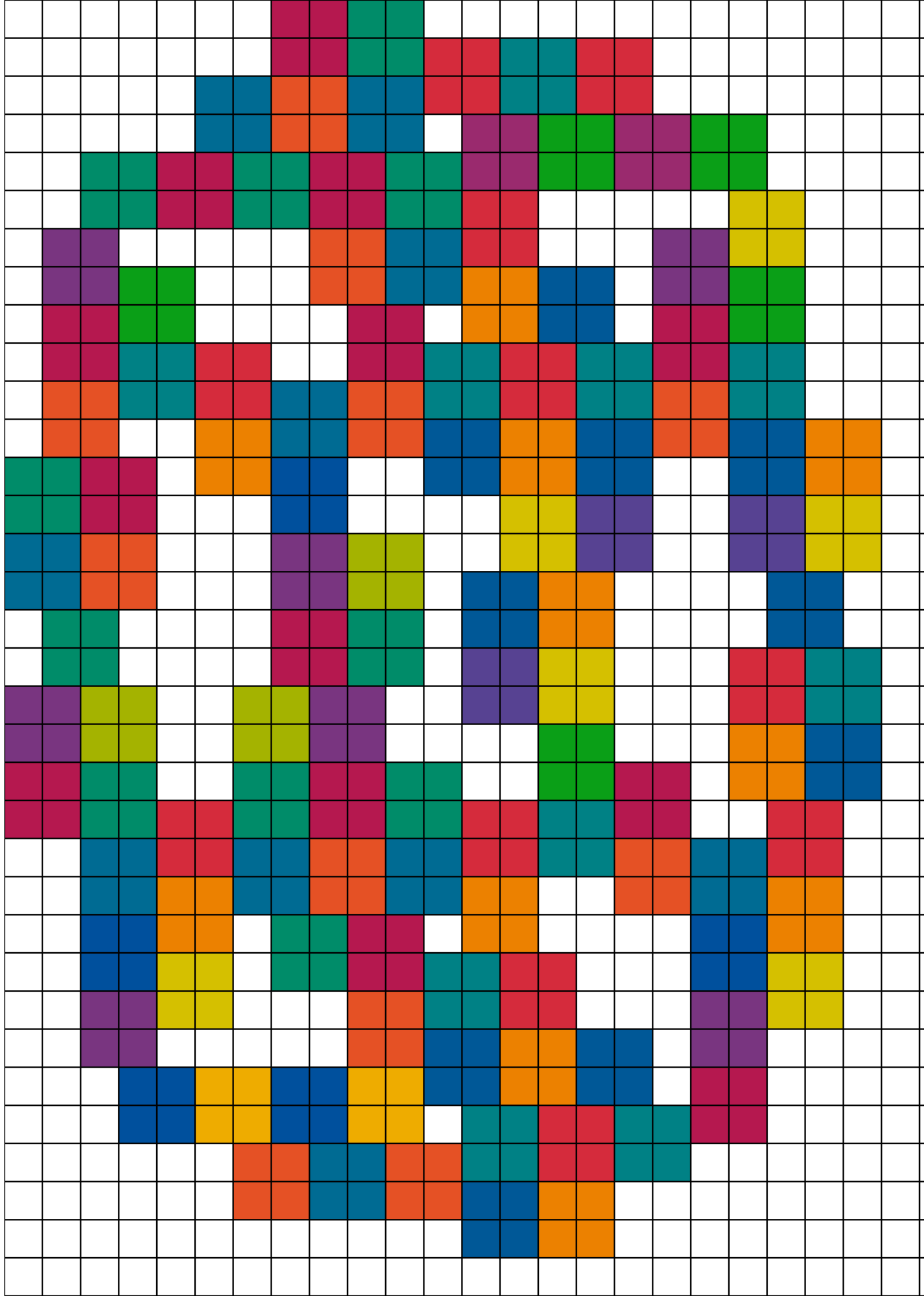}
\includegraphics[width=0.3\textwidth]{./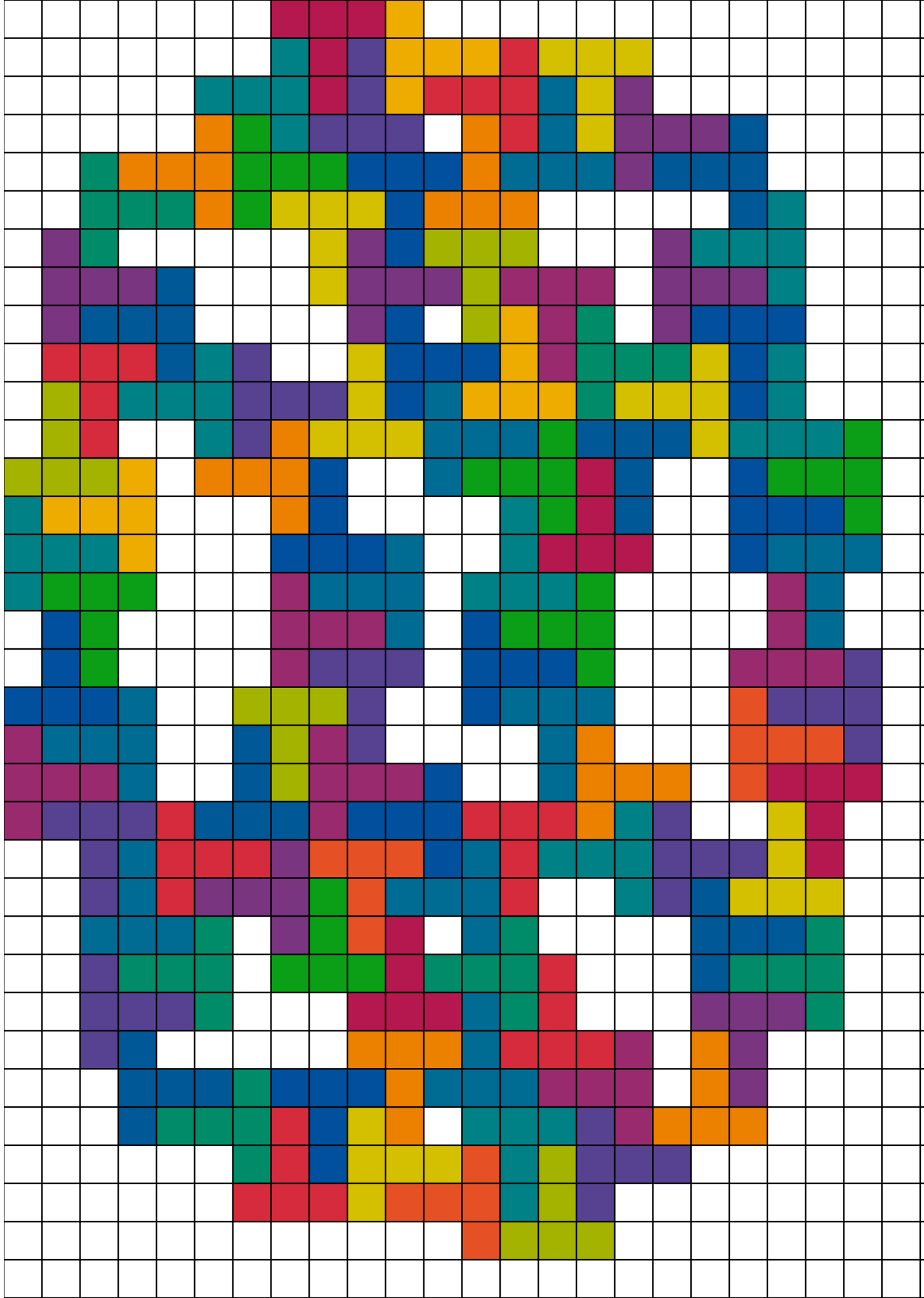}
\caption{Tiling patterns for T5L4Q4 improved from 560-omino to 480-omino}
\label{fig:T5-L4-O4}
\end{figure} 

For example, the previous best known shape for F5Q4T4 on \url{https://www.iread.it/Poly/} was a 760-omino,
and our new shape is only 160-omino (\figurename~\ref{fig:F5-O4-T4}).
The previous best known shape for T5L4Q4 on \url{https://sicherman.net/n445com/n445com.html}, 
which was a 560-omino, is improved to 480-omino (\figurename~\ref{fig:T5-L4-O4}).

The previous best known shapes for I5P5T5, I5P5Z5, L5P5X5, and P5U5V5 on 
\url{https://sicherman.net/rosp/triplep.html} were 120-omino, 200-omino, 400-omino, and 160-omino, respectively.
We obtain new better shapes of 
110-omino for I5P5T5 (\figurename~\ref{fig:I5-P5-T5}), 
150-omino for I5P5Z5 (\figurename~\ref{fig:I5-P5-Z5}), 
360-omino for L5P5X5 (\figurename~\ref{fig:L5-P5-X5}), and
120-omino for P5U5V5 (\figurename~\ref{fig:P5-U5-V5}), respectively.

\begin{figure}[thb]\centering
\includegraphics[bb=0 0 750 750,width=0.3\textwidth]{./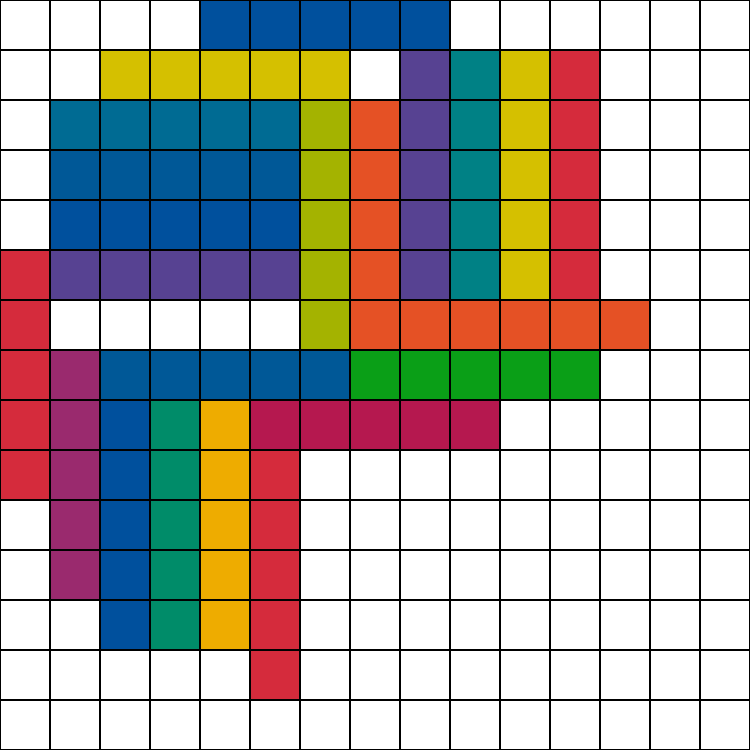}
\includegraphics[bb=0 0 750 750,width=0.3\textwidth]{./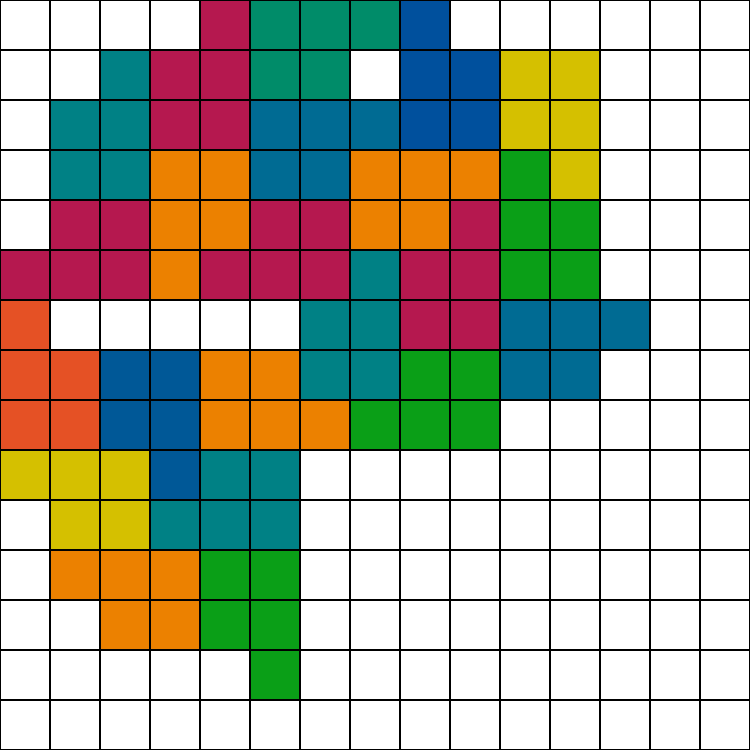}
\includegraphics[bb=0 0 750 750,width=0.3\textwidth]{./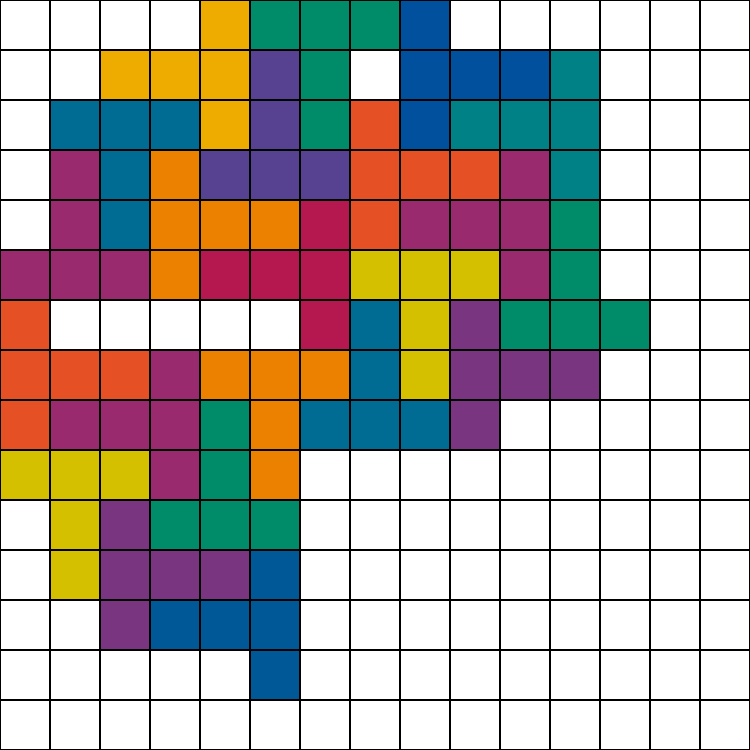}
\caption{Tiling patterns for I5P5T5 improved from 120-omino to 110-omino}
\label{fig:I5-P5-T5}
\end{figure} 

\begin{figure}[thb]\centering
\includegraphics[width=0.3\textwidth]{./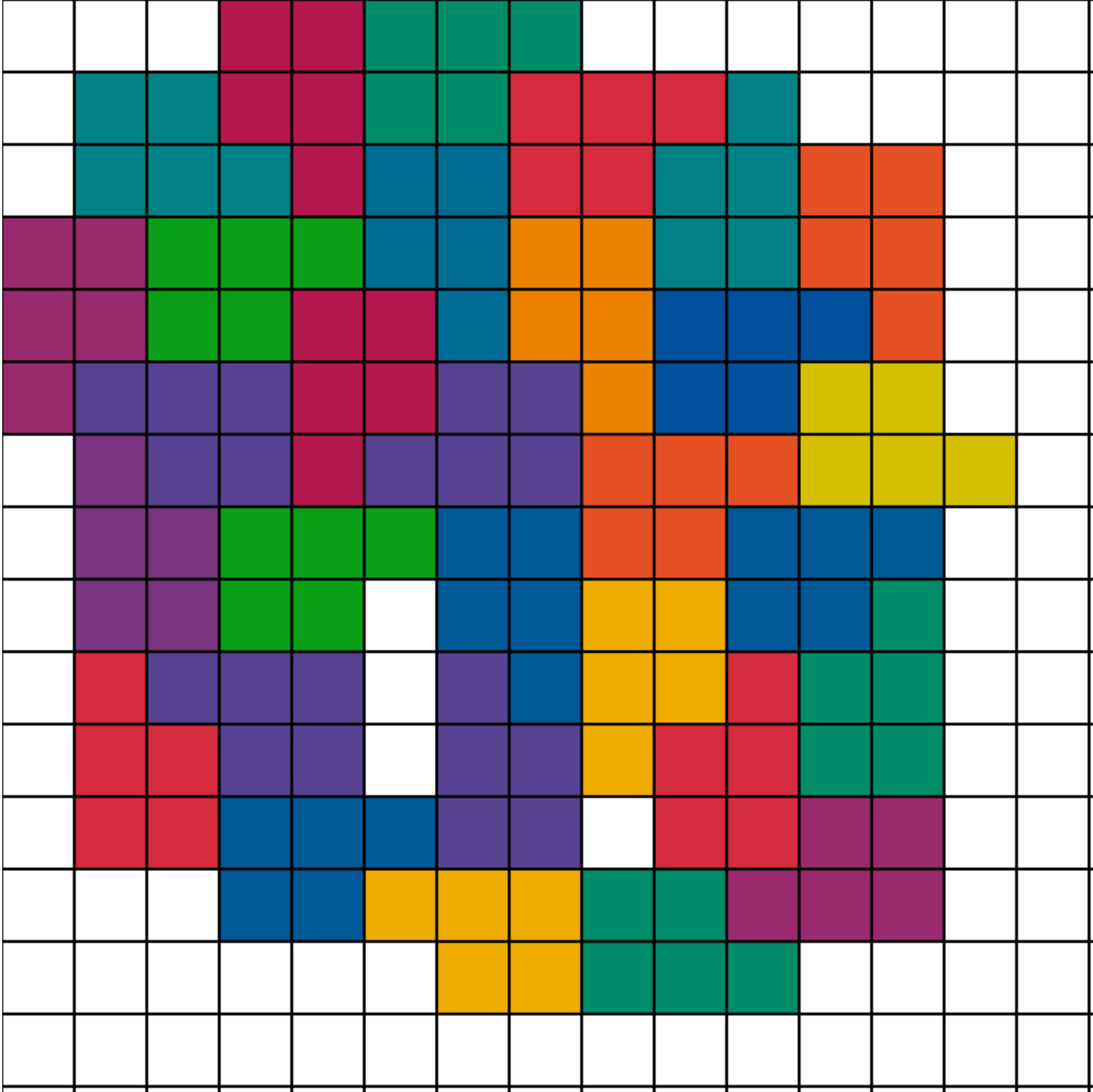}
\includegraphics[width=0.3\textwidth]{./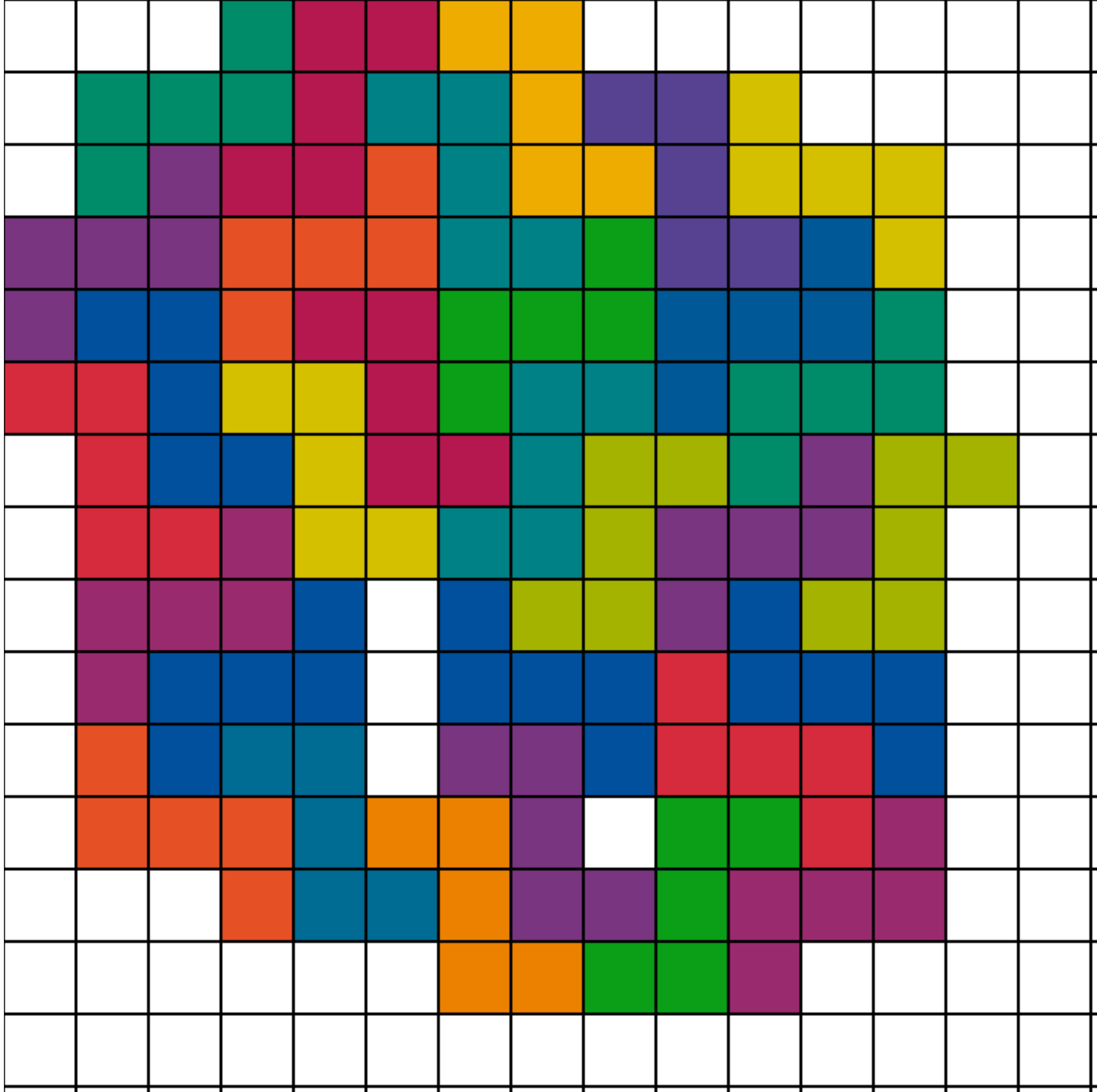}
\includegraphics[width=0.3\textwidth]{./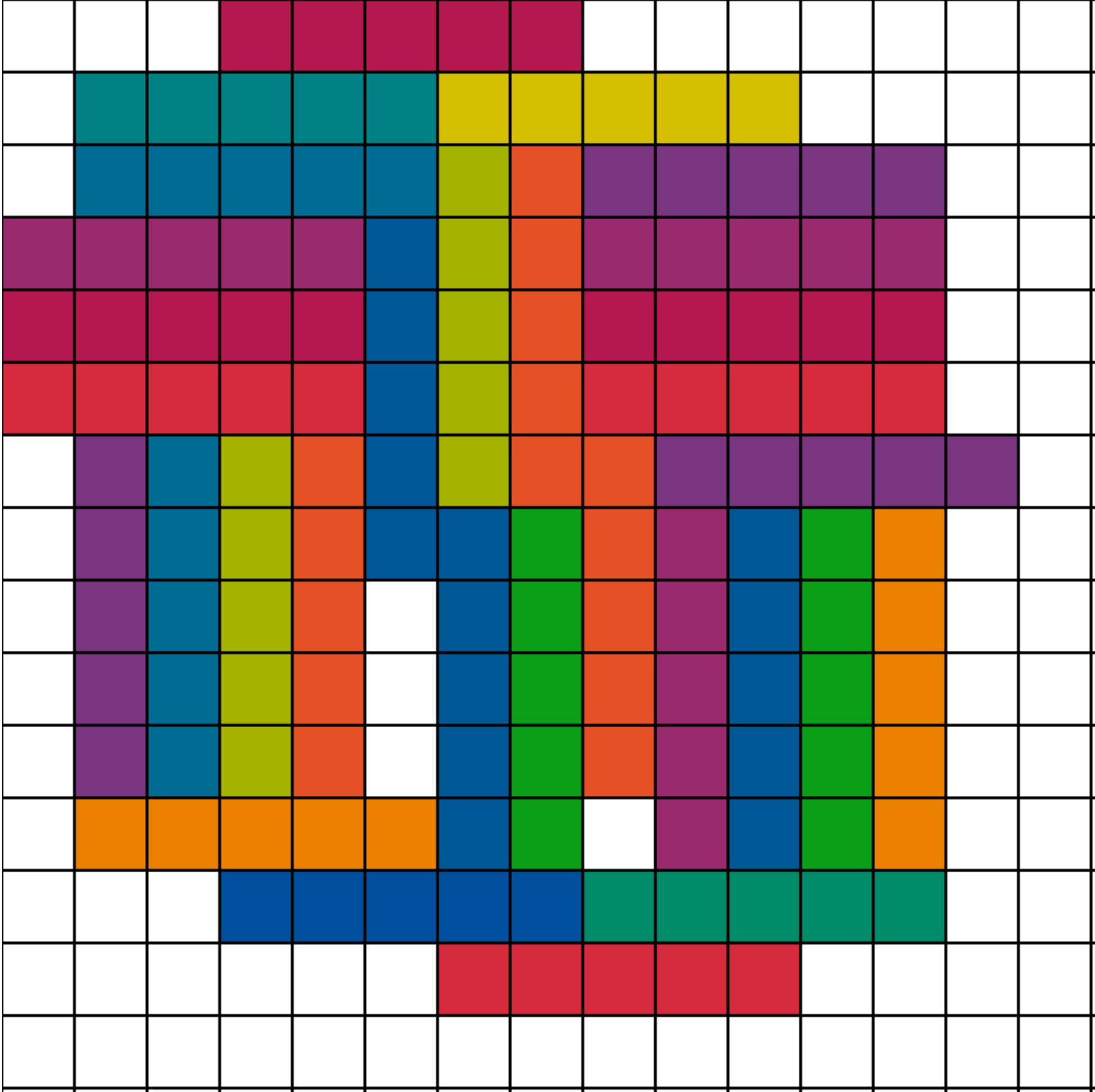}
\caption{Tiling patterns for I5P5Z5 improved from 200-omino to 150-omino}
\label{fig:I5-P5-Z5}
\end{figure} 

\begin{figure}[thb]\centering
\includegraphics[width=0.3\textwidth]{./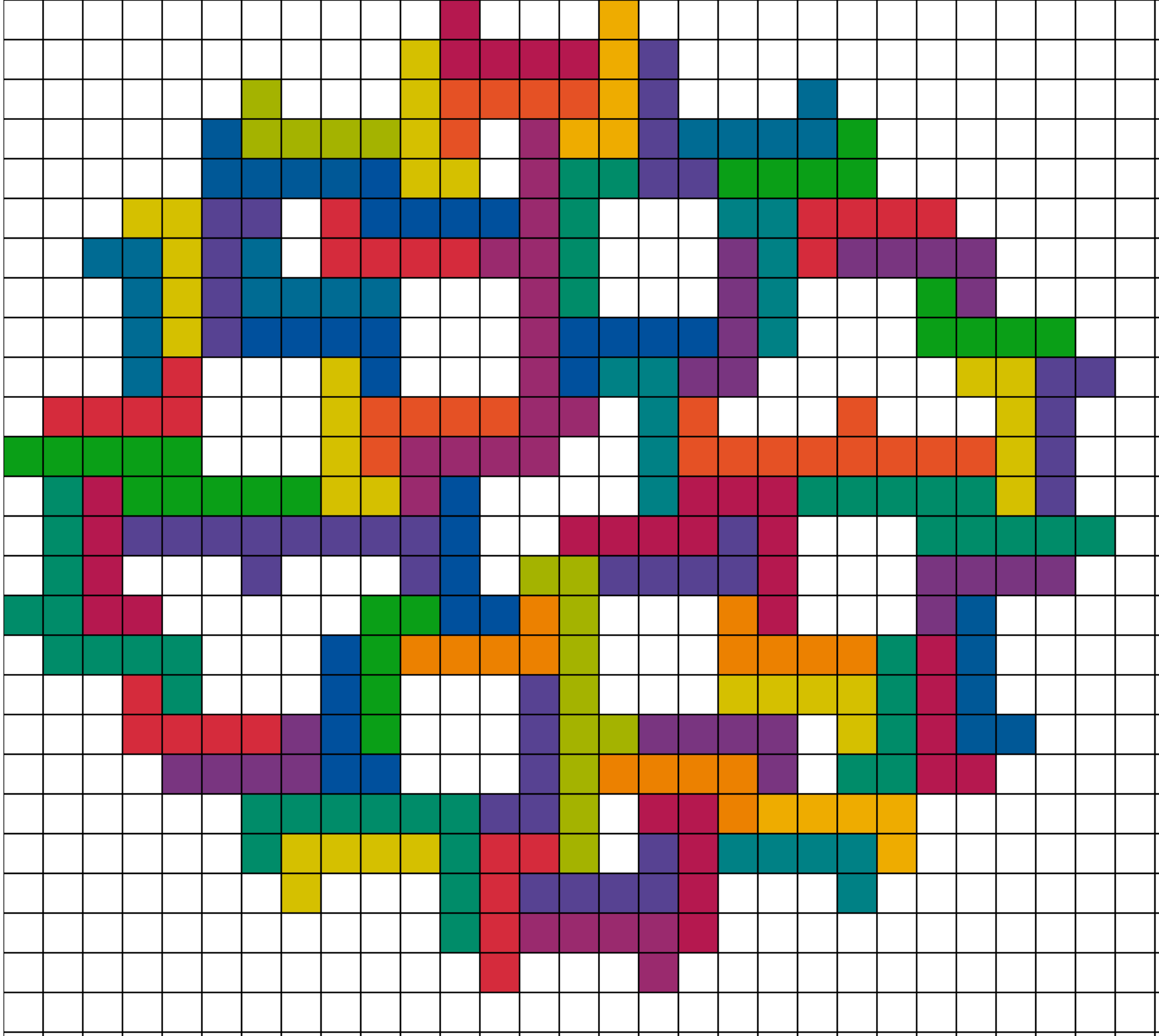}
\includegraphics[width=0.3\textwidth]{./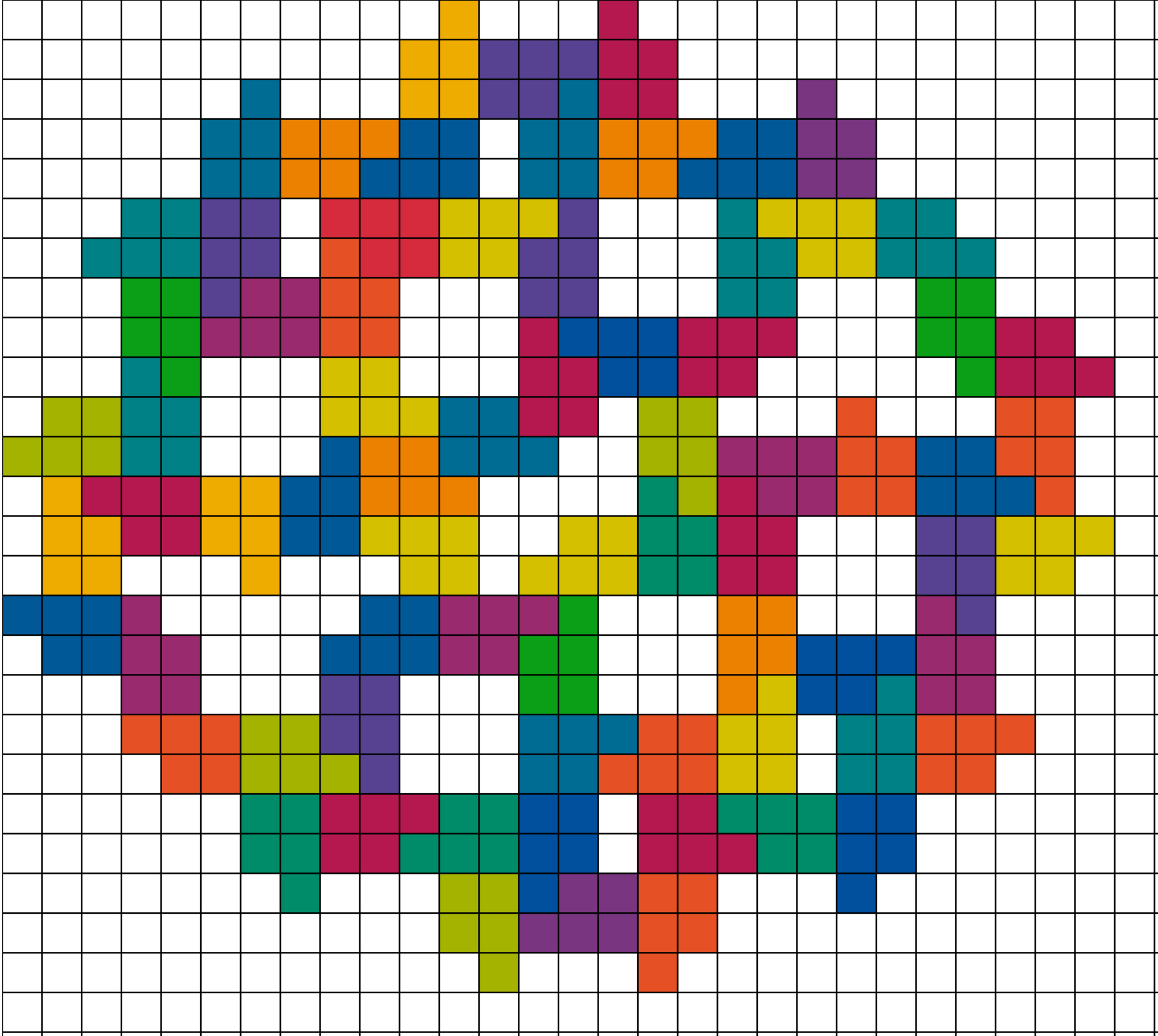}
\includegraphics[width=0.3\textwidth]{./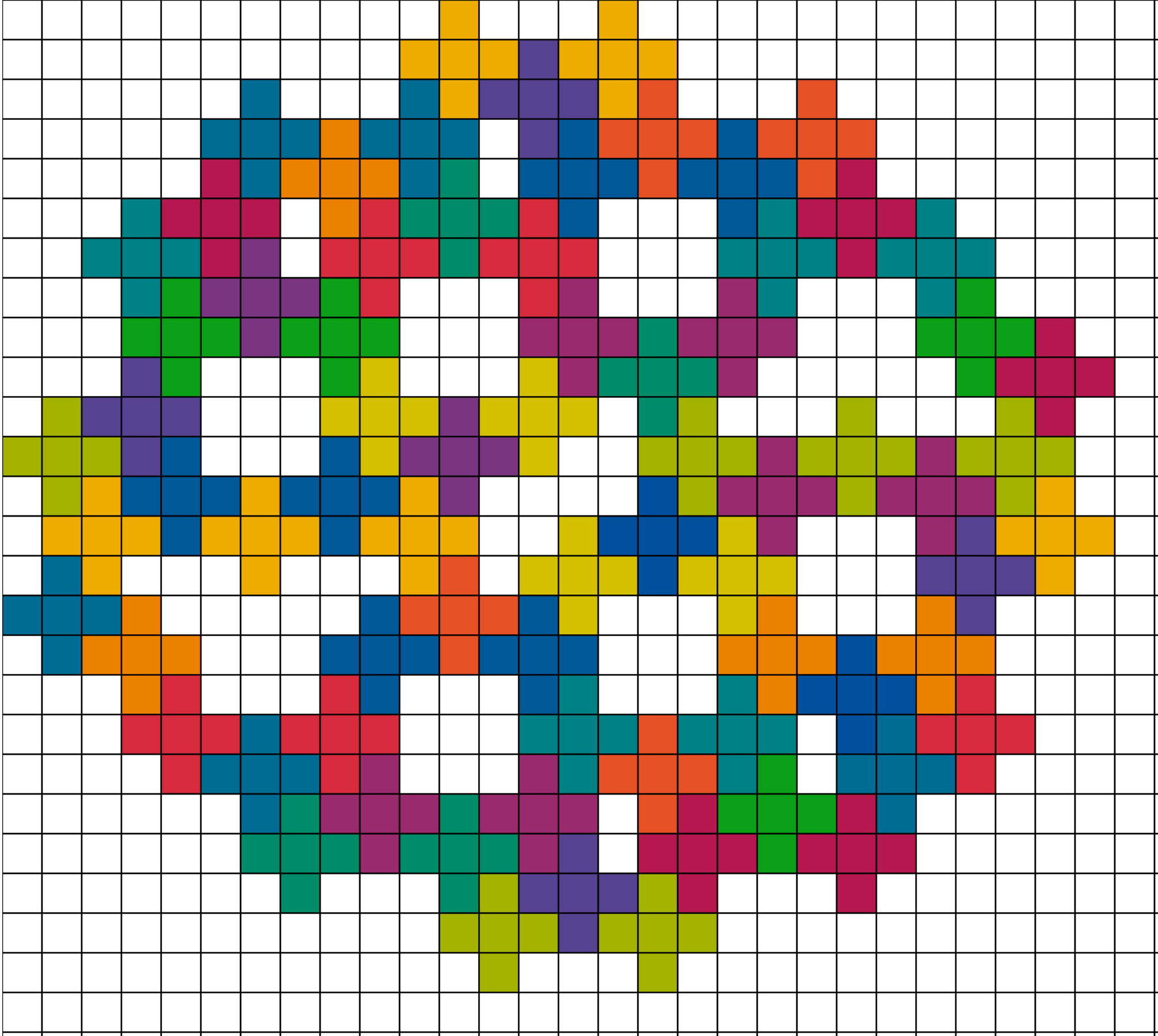}
\caption{Tiling patterns for L5P5X5 improved from 400-omino to 360-omino}
\label{fig:L5-P5-X5}
\end{figure}

\begin{figure}[thb]\centering
\includegraphics[width=0.3\textwidth]{./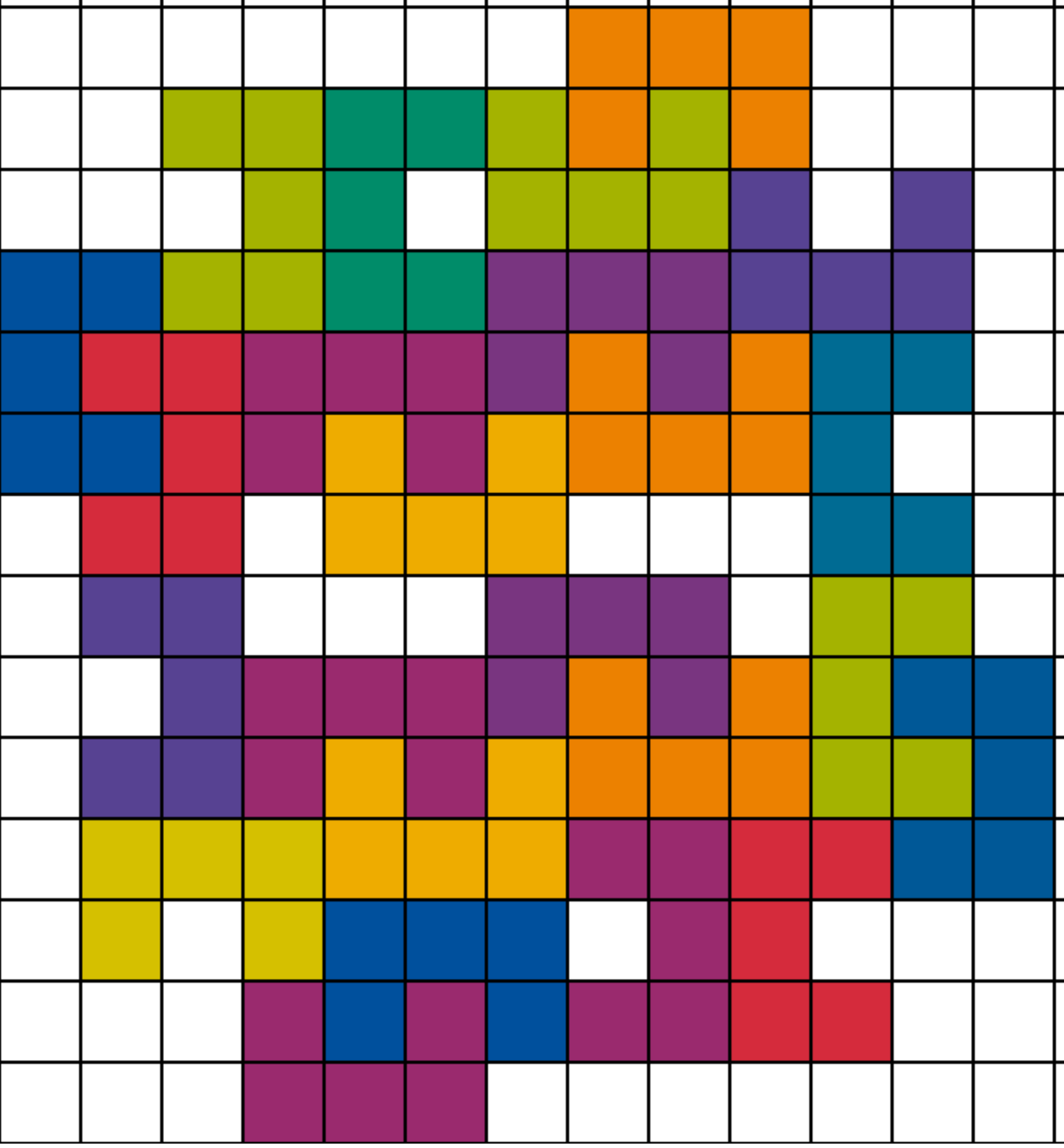}
\includegraphics[width=0.3\textwidth]{./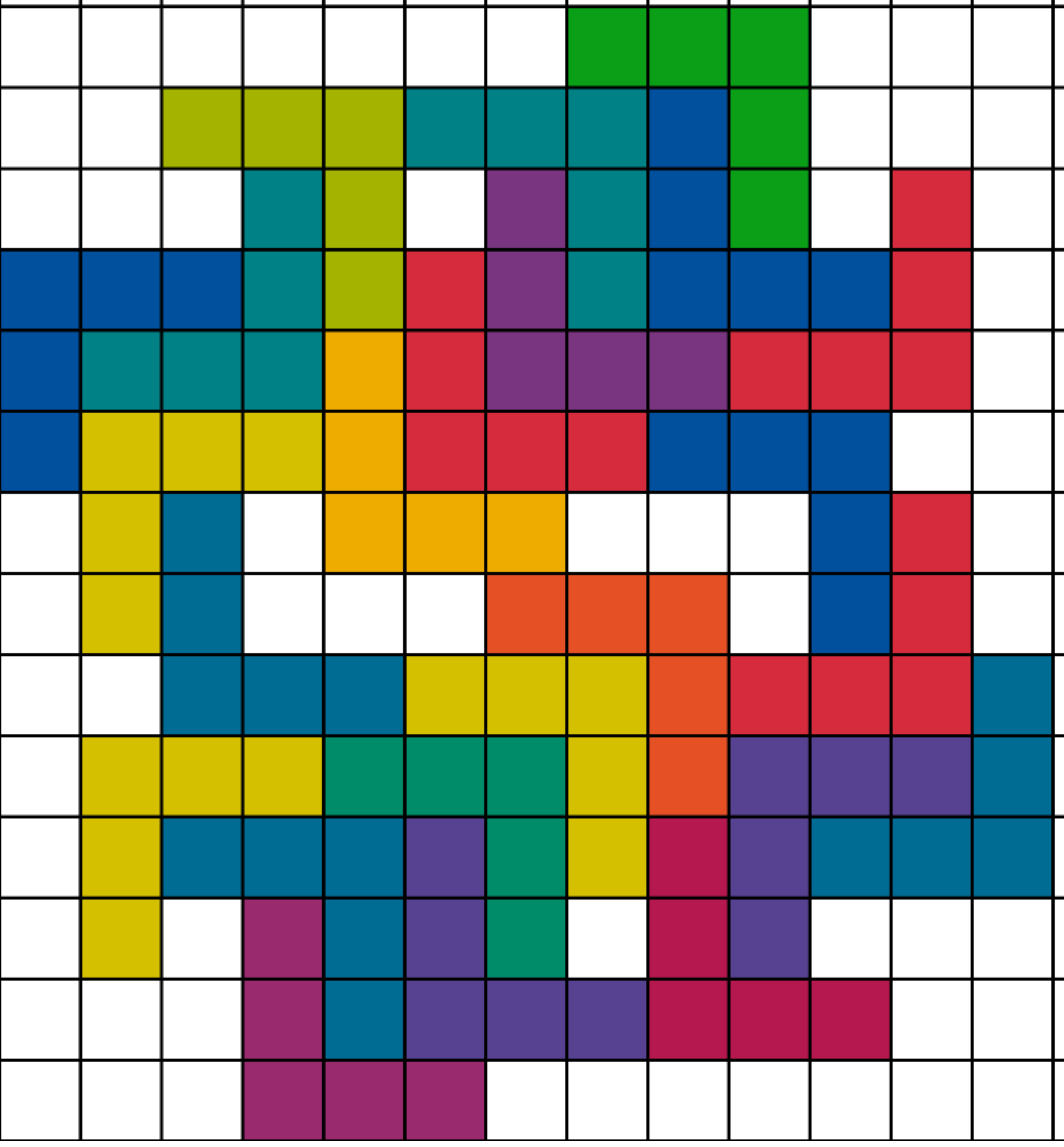}
\includegraphics[width=0.3\textwidth]{./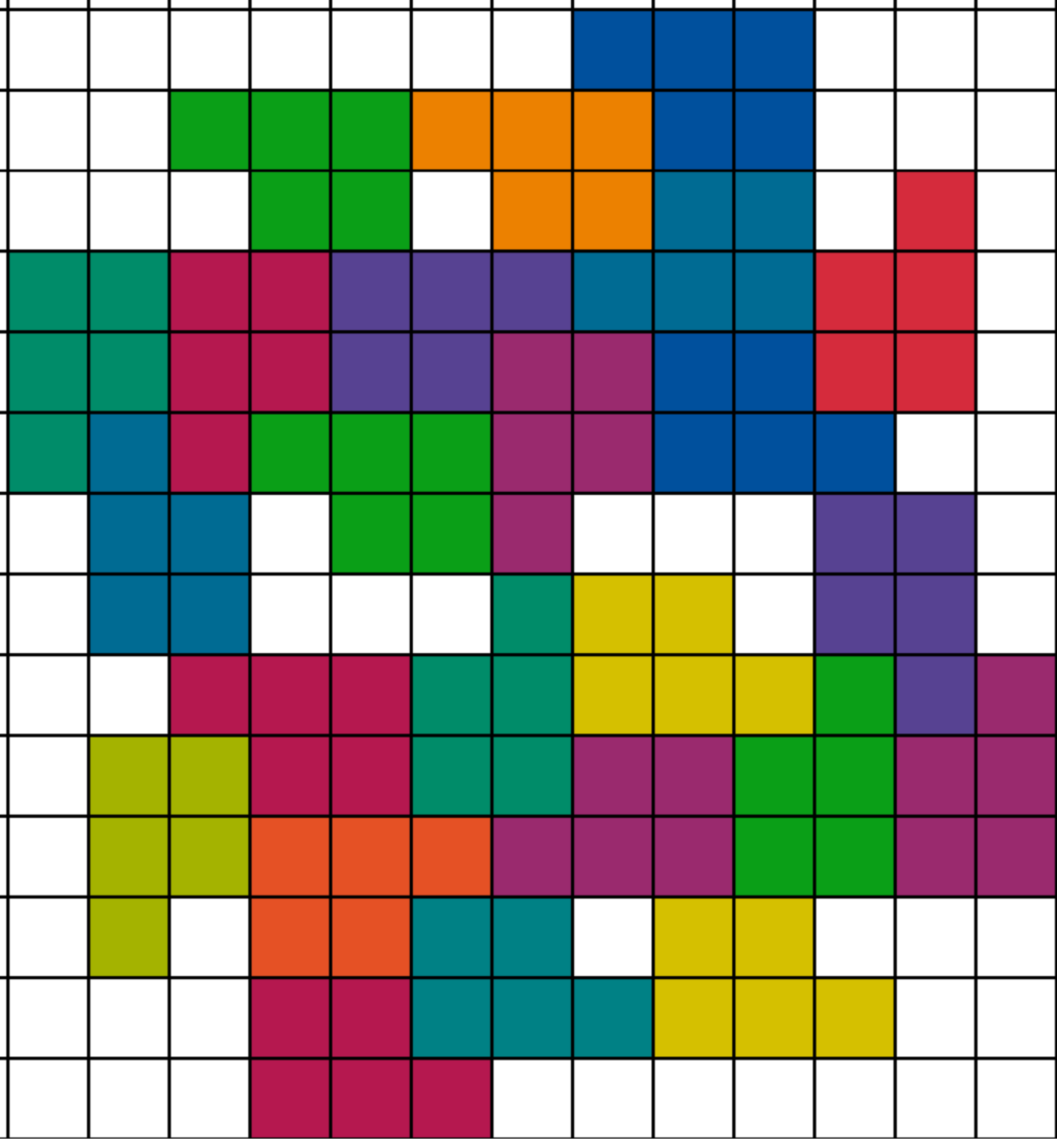}
\caption{Tiling patterns for V5U5P5 improved from 160-omino to 120-omino}
\label{fig:P5-U5-V5}
\end{figure} 

Two main differences exist between the formulations of the common shape puzzle and the packing puzzle in \cite{BHHM2021}.
The first one is that the goal shape is not provided in the common shape puzzle, whereas it is provided in the packing puzzle.
The second is that we must create a common (or congruent) shape 
using two different sets $\calS_1$ and $\calS_2$ of pieces in the common shape puzzle, 
whereas we have only one set of pieces in the packing puzzle.

To address the first point, we fix the bounding box of the goal shape.
We first fix the number of pieces (or $\msize{\calS_1}$ and $\msize{\calS_2}$),
and we attempt to create possible bounding boxes that contains these pieces.

In the packing puzzle, we can assume that each unit square of a goal shape is covered by exactly once by a piece.
However, in the common shape puzzle, each unit square of a bounding box is covered by either 0 or 2 pieces.
Moreover, when the square is covered by 2 pieces, these should be in $\calS_1$ and $\calS_2$.

We can modify the formulation of the packing puzzle in \cite{BHHM2021} to that for the common shape puzzle using these concepts.
Furthermore, it is straightforward to extend the problem from two sets $\calS_1$ and $\calS_2$ to 
three sets $\calS_1$, $\calS_2$, and $\calS_3$ (and more).
We investigated several cases that are available online, and achieved some improvements, as outlined in Appendix~\ref{sec:patterns}.

\section{Concluding Remarks}
We have considered the computational complexities of a generalized common shape puzzles, in which the goal shapes are not provided.
The puzzle is tractable when the number of pieces is a constant; however, 
it is strongly NP-complete even if the piece sets consist of small rectangles.
Moreover, if we are allowed to use the copies of the pieces repeatedly, the problem becomes undecidable.
It is possible to formulate the puzzle for some several different solvers in a natural form,
and we improved some known records for concrete instances using a SAT-based solver.
However, we have not yet succeeded in confirming that the results are the minimum solutions.
For example, we verified the pattern in \figurename~\ref{fig:5T=4Q} 
for each boundary box with a size of $i\times \floor{625/i}$ using $1\le i\le 25$ and confirmed that
there are no smaller patterns in these boundary boxes. 
However, this does not imply that the pattern in \figurename~\ref{fig:5T=4Q} is the least pattern.
Thus, efficient searching for the determination of the minimum solution remains open.
We have only considered the polyominoes in this study, and thus, the extension to general polygons is 
a natural topic for future work.

\bibliography{main}

\appendix




\end{document}